\definecolor{TUMBeamerYellow}    {rgb} {1.000,0.706,0.000}    
\definecolor{TUMBeamerOrange}    {rgb} {1.000,0.502,0.000}    
\definecolor{TUMBeamerRed}       {rgb} {0.898,0.204,0.094}    
\definecolor{TUMBeamerDarkRed}   {rgb} {0.792,0.129,0.247}    
\definecolor{TUMBeamerBlue}      {rgb} {0.000,0.600,1.000}    
\definecolor{TUMBeamerLightBlue} {rgb} {0.255,0.745,1.000}    
\definecolor{TUMBeamerGreen}     {rgb} {0.569,0.675,0.420}    
\definecolor{TUMBeamerLightGreen}{rgb} {0.710,0.792,0.510}    
\definecolor{TUMBlack}           {rgb} {0.000,0.000,0.000}    
\definecolor{TUMWhite}           {rgb} {1.000,1.000,1.000}    
\definecolor{TUMBlue}            {rgb} {0.000,0.396,0.741}    
\definecolor{TUMDarkBlue}        {rgb} {0.000,0.200,0.349}    
\definecolor{TUMDarkerBlue}      {rgb} {0.000,0.322,0.576}    
\definecolor{TUMMediumBlue}      {rgb} {0.000,0.451,0.812}    
\definecolor{TUMLighterBlue}     {rgb} {0.392,0.627,0.784}    
\definecolor{TUMLightBlue}       {rgb} {0.596,0.776,0.918}    
\definecolor{TUMDarkGray}        {rgb} {0.345,0.345,0.353}    
\definecolor{TUMMediumGray}      {rgb} {0.612,0.616,0.624}    
\definecolor{TUMLightGray}       {rgb} {0.851,0.855,0.859}    
\definecolor{TUMGreen}           {rgb} {0.635,0.678,0.000}    
\definecolor{TUMOrange}          {rgb} {0.890,0.447,0.133}    
\definecolor{TUMIvory}           {rgb} {0.855,0.843,0.796}    
\DeclareMathOperator{\tr}{tr}
\DeclareMathOperator{\T}{T}
\DeclareMathOperator{\He}{H}
\DeclareMathOperator{\inver}{-1}
\DeclareMathOperator{\diag}{diag}
\DeclareMathOperator{\rank}{rank}
\DeclareMathOperator{\imag}{j\hspace*{-2pt}}
\newcommand{\inv}{{\inver}}
\newcommand{\eye}{\bm{\mathrm{I}}}
\newcommand{\bbthet}{\bm{\bar{\theta}}}
\newcommand{\bD}{\bm{D}}
\newcommand{\Her}{{\He}}
\newcommand{\td}{{\text{d}}}
\newcommand{\tdk}{{\text{d},k}}
\newcommand{\tc}{{\text{c}}}
\newcommand{\tck}{{\text{c},k}}
\newcommand{\bC}{\bm{C}}
\newcommand{\bQ}{\bm{Q}}
\newcommand{\bV}{\bm{V}}
\newcommand{\NB}{N_{\text{B}}}
\newcommand{\NRe}{N_{\text{R}}}
\newcommand{\bU}{\bm{U}}
\newcommand{\tre}{{\text{r}}}
\newcommand{\trek}{{\text{r},k}}
\newcommand{\ts}{{\text{s}}}
\newcommand{\bH}{\bm{H}}
\newcommand{\bh}{\bm{h}}
\newcommand{\mThet}{\bm{\Theta}}
\newcommand{\bu}{\bm{u}}
\newcommand{\bv}{\bm{v}}
\newcommand{\be}{\bm{e}}
\newcommand{\transpo}{{\T}}
\newcommand{\bthet}{\bm{\theta}}
\newcommand{\cmplx}[1]{\mathbb{C}^{#1}}
\newcommand{\norm}[1]{\|#1\|}
\newcommand{\abs}[1]{\left|#1\right|}
\newcommand{\expct}[1]{\mathbb{E}\hspace*{-2pt}\left[#1\right]}
\newcommand{\gaussdist}[2]{\mathcal{N}_{\mathbb{C}}(#1,#2)}
\newcommand{\summe}[2]{\sum_{#1}^{#2}}
\def\BibTeX{{\rm B\kern-.05em{\sc i\kern-.025em b}\kern-.08em
    T\kern-.1667em\lower.7ex\hbox{E}\kern-.125emX}}
\newtheorem{theorem}{Theorem}
\newtheorem{lemma}{Lemma}
\newtheorem{proposition}{Proposition}
\newtheorem{corollary}{Corollary}
\begin{document}

\begin{acronym}
    \acro{AoD}{angle of departure}
    \acro{AoA}{angle of arrival}
    \acro{ULA}{uniform linear array}
    \acro{CSI}{channel state information}
    \acro{LOS}{line-of-sight}
    \acro{EVD}{eigenvalue decomposition}
    \acro{BS}{base station}
    \acro{MS}{mobile station}
    \acro{mmWave}{millimeter wave}
    \acro{DPC}{dirty paper coding}
    \acro{RIS}{reconfigurable intelligent surface}
    \acro{AWGN}{additive white gaussian noise}
    \acro{MIMO}{multiple-input multiple-output}
    \acro{UL}{uplink}
    \acro{DL}{downlink}
    \acro{OFDM}{orthogonal frequency-division multiplexing}
    \acro{TDD}{time-division duplex}
    \acro{LS}{least squares}
    \acro{MMSE}{minimum mean square error}
    \acro{SINR}{signal-to-interference-plus-noise ratio}
    \acro{OBP}{optimal bilinear precoder}
    \acro{LMMSE}{linear minimum mean square error}
    \acro{MRT}{maximum ratio transmitting}
    \acro{M-OBP}{multi-cell optimal bilinear precoder}
    \acro{S-OBP}{single-cell optimal bilinear precoder}
    \acro{SNR}{signal-to-noise ratio}
    \acro{SDR}{Semidefinite Relaxation}
    \acro{SE}{spectral efficiency}
    \acro{GCEs}{Gram channel eigenvalues}
    \acro{BCD}{block coordinate descent}
    \acro{LISA}{linear successive allocation}
    \acro{WMMSE}{weighted minimum mean square error}
    \acro{SVD}{singular value decomposition}
    \acro{STAR}{simultaneously transmitting and reflecting}
    \acro{BD}{beyond diagonal}
\end{acronym}


\title{High-SNR Comparison of Linear Precoding\\and DPC in RIS-Aided MIMO Broadcast Channels}

\author{\IEEEauthorblockN{Dominik Semmler,  Benedikt Fesl, Michael Joham, and Wolfgang Utschick\\}
\IEEEauthorblockA{\textit{School of Computation, Information and Technology, Technical University of Munich, Germany} \\
email: \{dominik.semmler,benedikt.fesl,joham,utschick\}@tum.de}
}

\maketitle

\begin{abstract}
    We compare \ac{DPC} and linear precoding methods in a \ac{RIS}-aided high-\ac{SNR} scenario, where the channel between the \ac{BS} and the \ac{RIS} is dominated by a \ac{LOS} component.
    Furthermore, we consider two groups of users where one group can be efficiently served by the \ac{BS}, whereas the other one has a negligible direct channel and has to be served via the \ac{RIS}.
    Within this scenario, we analytically show fundamental differences between \ac{DPC} and linear methods.
    In particular, our analysis addresses two essential aspects, i.e., the orthogonality of the \ac{BS}-\ac{RIS} channel with the direct channel and a channel mitigation term, depending on the number of \ac{RIS} elements, that is present only for linear precoding techniques.
    The mitigation term generally leads to strong limitations for the linear method, especially for random or statistical phase shifts.
    Moreover, we discuss under which circumstances this mitigation term is negligible and in which scenarios \ac{DPC} and linear precoding lead to the same performance.
\end{abstract}
\begin{IEEEkeywords}
   DPC, zero-forcing, LOS, exponential integral
\end{IEEEkeywords}
\begin{figure}[b]
    \onecolumn
    \centering
    \scriptsize{This work has been submitted to the IEEE for possible publication. Copyright may be transferred without notice, after which this version may no longer be accessible.}
    \vspace{-1.3cm}
    \twocolumn
\end{figure}
\section{Introduction}
\acp{RIS} are considered to be an important technology for future wireless communication systems which are able to significantly enhance the system performance \cite{Power_Min_IRS}.
The \ac{RIS} is a surface consisting of many passive elements which allow to reconfigure the channel environment.
Modelling the \ac{RIS} is still under ongoing research.
Various architectures for \acp{RIS} exist, e.g., \ac{STAR} \acp{RIS} (see \cite{STARRIS,STARRISTwo}),  \ac{BD}-\acp{RIS} where the reflecting elements are connected to each other (see \cite{BeyondDiagonalScattering}, \cite{BeyondDiagonal}), as well as active \acp{RIS} (see \cite{ActiveVSPassive}, \cite{ActiveVSPassiveTwo}).
The conventional architecture is the passive, single-connected \ac{RIS}, where the elements are not connected with each other.
This architecture is usually modeled with the phase shift model for which a considerable amount of literature is available (see, e.g., \cite{Power_Min_IRS,EnergyEff,WSR,WMMSEMIMO,MIMOP2PCap,StatisticalCSITwo}).
Models based on the impedance and scattering formulations (see \cite{BeyondDiagonalScattering,Impedance}) have been compared to the conventional phase-shift model in \cite{ScatteringImpedanceCompare,FollowUPSZJournal,FollowUpUniversal,WSAComparison}.
These models also allow to incorporate mutual coupling (see \cite{Impedance,MutualCoupling,DecouplingNetwork}) between the reflecting elements in the system model.
Mutual coupling can, however, also be taken into account with decoupling networks (see \cite{DecouplingNetwork}) which have the beneficial aspect that  the resulting system model has the same structure as in the case without mutual coupling. 
Throughout this article, we only consider the conventional phase shift model without mutual coupling.
However, the results of this article can be directly extended to the model in \cite{ScatteringImpedanceCompare} as well as to mutual coupling with the help of decoupling networks (see \cite{DecouplingNetwork}).

A major challenge for the \ac{RIS} design is the large number of reflecting elements which have to be jointly optimized with the transmit filters in a downlink scenario.
The classical approach is to first estimate the channel and, afterwards, solve the joint optimization of precoders and reflecting elements.
When considering perfect estimates, and, hence, assuming perfect instantaneous \ac{CSI}, the \ac{RIS} already showed a significant performance increase in various scenarios, e.g., for power consumption \cite{Power_Min_IRS}, energy efficiency \cite{EnergyEff}, as well as for spectral efficiency \cite{WSR,WMMSEMIMO,MIMOP2PCap}.
However, channel estimation in a \ac{RIS} scenario (see \cite{ChannelEstimation,ChannelEstimationTwo}) is difficult due to the high number of parameters.
While there are methods which approach this problem (see \cite{ChannelEstimationReducedPilots}, \cite{ChannelEstimationReducedPilotsTwo}), designing the reflecting elements only based on the statistics of the channel (see \cite{StatisticalCSITwoTimeScale,StatisticalCSI,StatisticalCSITwo,StatisticalCSIThree}) circumvents the channel estimation phase.
Using only statistical \ac{CSI} does not require updating the reflecting elements in each coherence interval of the channel which is an additional upside.
In this article, we mathematically analyze the \ac{SE} when the reflecting elements are optimized based on instantaneous \ac{CSI}, statistical \ac{CSI} as well as when they are chosen randomly.

When it comes to the precoder design at the base station, we consider both \ac{DPC} (see \cite{DPC}, \cite{DPCTwo}) as well as linear approaches.
For the linear methods, we use zero-forcing beamforming which is high-\ac{SNR} optimal.
\ac{DPC} has already been compared to zero-forcing in the conventional channel model without the \ac{RIS} (see \cite{ZFDPCCompare}, \cite{HighSNRInstant}).
\ac{DPC}, being the optimal choice for the downlink scneario, always leads to a better \ac{SE} which is especially noticeable when the channel is ill-conditioned.
In this article, the difference between \ac{DPC} and linear precoding is analyzed mathematically and we show that their difference is generally significant when including an \ac{RIS}. 

An important aspect throughout this article is that we are considering a scenario in which the channel between the \ac{BS} and the \ac{RIS} is given by a rank-one matrix, motivated by the fact that the \ac{BS} and the \ac{RIS} are typically deployed in \ac{LOS} and, additionally, at a considerable height (see \cite{LOSAssump}).
This has already been assumed in a number of articles (see, e.g., \cite{MaxMinSINR,LOSAssumpOne,LOSAssumpTwo,LOSAssumpThree,LOSAssumpFour}).
Under this assumption, the eigenvalue result of \cite{Eigenvalues} holds, i.e., the eigenvalues of the composite channel Gram matrix (we assume that all user channels are stacked into this matrix), including the \ac{RIS}, can, at maximum, be improved to the next larger eigenvalues of the composite direct channel Gram matrix.
While this result shows clear limitations, the \ac{RIS} can still have a significant impact on the scenario (see \cite{LOSZeroForc}), especially for a lower number of users.

Furthermore, we consider a scenario where the users can be divided into two groups.
One group consists of users with a strong direct channel, which could be efficiently served by the \ac{BS} without the \ac{RIS},
whereas the other group consists of users with a negligible direct channel and have to be served via the \ac{RIS}.
In this case, the eigenvalues of the weak users' direct channels can be modeled as being approximately zero, and, according to \cite{Eigenvalues}, only one eigenvalue can be improved due to the rank-one assumption for the \ac{BS}-\ac{RIS} channel.
Therefore, only one user of the weak user group will be additionally served. 
Accordingly, throughout this article, we assume w.l.o.g. that the number of weak users to be served via the \ac{RIS} is one. 

If the strong users did not exist and only the weak users were present, the direct composite Gram channel matrix has only one non-zero eigenvalue and would be rank-one.
Consequently, the optimal solution for sum-\ac{SE} maximization is to only allocate one user in total.
This scenario was already analyzed in \cite{MaxMinSINR} under the max-min fairness criterion in an asymptotic regime,
where it was also proposed to only serve a single user.
In contrast, when strong users do exist and the direct channel is non-negligible, the situation is different and it is not necessarily sum-\ac{SE} optimal to just improve the weak user's channel gain as the additional strong users introduce interference.

In this article, this scenario, with some direct channels being non-negligible, is analyzed in the high-\ac{SNR} regime.
For high \ac{SNR} values, all the strong direct channel users are allocated as well as one additional user of the weak users group via the \ac{RIS}.
Hence, we obtain a rank-improvement scenario for which the \ac{RIS} is particularly suited (see  \cite{RankImprov}).

In particular, we make the following contributions:
\begin{itemize}
    \item We derive the optimal \ac{SE} expressions for linear precoding and \ac{DPC} in the high-\ac{SNR} regime.
          These expressions allow us to analyze this scenario mathematically. Additionally, we can show that fundamental differences between \ac{DPC} and linear precoding occur also in a \ac{RIS}-assisted scenario.
    \item We show analytically that linear precoding suffers from a mitigation term that completely vanishes in the case of \ac{DPC}.
            Moreover, we show the importance of the \ac{BS}-\ac{RIS} channel being orthogonal to the direct channel and that \ac{DPC} has an advantage if this condition is not perfectly fulfilled.
    \item More results are derived from the considerations above.
            Firstly, only for \ac{DPC}, it is high-\ac{SNR} optimal to maximize the weak user's channel gain, whereas for linear precoding, the solution is more intricate and has no closed-form.
          Secondly, when considering random or statistical phase shifts, the \ac{SE} for linear precoding saturates for an increasing number of \ac{RIS} elements in case of i.i.d. Rayleigh fading, whereas for \ac{DPC}, the \ac{SE} is increasing monotonically.
\end{itemize}
\section{System Model}
A \ac{DL} scenario is considered in which $K+1$ single-antenna users are served by a single \ac{BS} having $\NB$ antennas.
Additionally, we assume one \ac{RIS} having $\NRe$ reflecting elements.
The channel from the \ac{BS} to the $k$-th user reads as
\begin{equation}
    \bh_k^\Her = \bh^\Her_\tdk + \bh_\trek^\Her \mThet \bm{G} \in \cmplx{1 \times \NB}
\end{equation}
where $\bh_\tdk^\Her \in \cmplx{1 \times \NB}$ is the direct channel from the \ac{BS} to the $k$-th user,  $ \bh_\trek^\Her \in \cmplx{1 \times \NRe}$ is the reflecting channel from the \ac{RIS}
to the $k$-th user, and
\begin{equation}
    \bm{G} = \sqrt{L_{\text{G}} \NB} \bm{a}\bm{b}^\Her \in \cmplx{\NRe \times \NB}
\end{equation} is the \ac{LOS} rank-one channel from the \ac{BS} to the \ac{RIS} where $L_{\text{G}}$ is the pathloss.
We have a scaling with $\NB$ as the steering vector $\bm{b}$  is normalized such that $\norm{\bm{b}}_2=1$, whereas $\norm{\bm{a}}_2^2 = \NRe$.
The matrix $\mThet = \diag(\bthet) \in \cmplx{\NRe \times \NRe}$ with $\bthet \in \{\bm{z} \in \cmplx{\NRe}: z_n = e^{ \imag\phi_n}, \; \phi_n \in [0,2\pi),\; \forall n\}$ is the phase manipulation at the \ac{RIS}.
Recently, new models appeared based on the impedance formulation (see \cite{ScatteringImpedanceCompare}) where the constraints are of the form $\{\bm{z} \in \cmplx{\NRe}: z_n = e^{ \imag\phi_n} - 1, \; \phi_n \in [0,2\pi),\; \forall n\}$.
While we consider the unit-modulus constraint set from above, all the results from this article can be directly transferred to the other models.
Additionally, we neglect mutual coupling in this article and, hence, the phase shift matrix is diagonal.
However, when using decoupling networks (see \cite{DecouplingNetwork}), the results can be directly extended to the case of mutual coupling as the phase shift matrix becomes diagonal again in this case.
Even for non-diagonal $\mThet$ matrices, e.g., in case of mutual coupling without decoupling networks, or in the case of \ac{BD}-\acp{RIS}, similar expressions as in this article can be obtained. 

In case $\mThet$ is diagonal, a more compact notation is available and is used for notational convenience throughout this article.
In particular, we rewrite $\bh_k^\Her$ as
\begin{equation}\label{eq:Cascaded}
    \bh_k^\Her = \bh^\Her_\tdk + \bh_\tck^\Her \bthet \bm{b}^\Her \in \cmplx{1 \times \NB}
\end{equation}
with the cascaded channel being defined as
\begin{align}
    \bh_\tck^\Her &= \bh_\trek^\Her \diag(\bm{a})\sqrt{L_G \NB}.
\end{align}

As pointed out in the introduction, we assume that exactly one of the users has a negligible direct channel and that this user can only be properly served via the \ac{RIS}. 
We consider user $K+1$ to be this weak user, whereas the other, i.e., strong users ($1$ until $K$) are stacked into the channel matrices 
$\bm{H}_{\text{d}}^\ts =[\bh_{\td,1},\dots,\bh_{\td,K}]^\Her$, $\bm{H}_{\text{r}}^\ts =[\bh_{\tre,1},\dots,\bh_{\tre,K}]^\Her$ and according to the notation in \eqref{eq:Cascaded} equivalently $\bm{H}_{\tc}^\ts =[\bh_{\tc,1},\dots,\bh_{\tc,K}]^\Her$.
Assuming that only one user has a negligible direct channel is motivated by the fact that when having a group of weak users with no direct channel, only one can be additionally allocated via the \ac{RIS}.
This is because the \ac{BS}-\ac{RIS} channel is rank-one and this fact has been discussed in the introduction based on the eigenvalue result of \cite{Eigenvalues}.
However, this can also be directly observed when considering the composite channel matrix 
\begin{equation}
    \bH = { \begin{bmatrix}
        \bm{H}_{\text{d}}^\ts\\
        \bm{0}\\
    \end{bmatrix}}
    +
\bm{H}_{\text{c}} \bthet \bm{b}^\Her
\end{equation}
where $\bH_\tc$ are the stacked cascaded channels of all users.
We can directly see that the maximum rank of the composite channel matrix is $\rank(\bH) \le \rank(\bH_\td^\ts)+1 = K+1$ and, therefore, maximally one additional user is allocated.
Hence, w.l.o.g we can write the composite channel matrix as
\begin{equation}
    \label{eq:CompositeChannelMatrixDefinition}
    \bm{H}_{} = \begin{bmatrix}
        \bH^\ts\\
        \bh_{K+1}^\Her
    \end{bmatrix}     =\underbrace{ \begin{bmatrix}
        \bm{H}_{\text{d}}^\ts\\
        \bm{0}^\Her\\
    \end{bmatrix}}_{\bm{H}_{\text{d}}}
    +
    \underbrace{ \begin{bmatrix} 
        \bm{H}_{\text{c}}^\ts\\
        \bh_{\tc,K+1}^\Her
    \end{bmatrix} }_{\bm{H}_{\text{c}}}
    \bthet\bm{b}^\Her \in \cmplx{K+1\times \NB}.
\end{equation}
\section{SE Expressions for DPC and ZF}
In this section, we derive the asymptotic \ac{SE} expressions at high-\ac{SNR} for both linear precoding and \ac{DPC}.
To this end, we exploit the rank-one property of the \ac{BS}-\ac{RIS} channel.
Firstly, we split the channel matrix $\bH$ in the subspace via the \ac{BS}-\ac{RIS} channel $\bm{b}\bm{b}^\Her$ and its orthogonal space $ \bm{P}_{\bm{b}}^{\perp}$ as 
\begin{equation}
    \bH = \bH_\td \bm{P}_{\bm{b}}^{\perp} + \left(\bm{H}_{\text{c}}\bthet + \bH_\td\bm{b} \right)\bm{b}^\Her
\end{equation}
where $ \bm{P}_{\bm{b}}^{\perp}$ is the orthogonal projector
\begin{equation}
    \bm{P}_{\bm{b}}^{\perp} = \eye - \bm{b}\bm{b}^\Her.
\end{equation}
Similar to \cite{Eigenvalues}, this allows us to express the channel Gram matrix as
\begin{align}
    \label{eq:HHDttDDef}
    \bm{H}_{}\bm{H}_{}^{\He} &= \begin{bmatrix}
        \bm{C}_\ts & \bm{0}\\
        \bm{0}^\transpo& 0
    \end{bmatrix} + \bm{D} \bbthet \bbthet^\Her \bm{D}^\Her
\end{align}
with the definitions
    \begin{align}
        \label{eq:CiDefinition}
    \bm{C}_\ts  &= \bm{H}_{\text{d}}^\ts\bm{P}_{\bm{b}}^{\perp} \bm{H}_{\text{d}}^{\ts,\Her} \in \cmplx{K \times K}, \; \\
    \label{eq:DiDefinition}
      \bD &= \begin{bmatrix}
        \bm{H}_{\text{c}},\; \bm{H}_{\text{d}}\bm{b}
    \end{bmatrix},
    \quad  \bbthet = \begin{bmatrix}
        \bthet^\Her & 1
    \end{bmatrix}^\Her.
\end{align}
The matrix $\bC_\ts$ corresponds to the direct Gram channel matrix exluding the direction $\bm{b}$ to the \ac{RIS},
whereas $\bD \bbthet$ represents the channel via the direction $\bm{b}$, i.e., the direction via the \ac{RIS}.
Note, that also the direct channel component $\bH_\td \bm{b}$ is incorporated in the part $\bD \bbthet$.
In the following, we assume that $\be_{K+1}^{\T}\bD\bbthet = \bh_{\text{c},K+1}^\Her\bthet \neq 0$ where $\be_n$ is the $n$-th canonical basis vector.
This means that the weak user $K+1$ has a non-zero \ac{RIS}-user channel.
For example, in case no \ac{RIS} were present, i.e., $\bthet = \bm{0}$, the weak user would effectively not exist.
\subsection{Linear Precoding}
For linear precoding, it is known that zero-forcing is optimal at high-\ac{SNR} and the sum-\ac{SE} is given by (see \cite{HighSNRInstant})
\begin{equation}
    \label{eq:ActualOptProblem}
    \begin{aligned}
    \ac{SE}_{\text{ZF-General}}=\summe{k=1}{K} \log_2\left(1+ \frac{\gamma_{k}}{\be^\transpo_k (\bH \bH^\Her)^\inv\be_k}\right)
    \end{aligned}
   \end{equation}
where $\gamma_k$ are the power allocations.
Under the condition $\bh_{\text{c},K+1}^\Her\bthet \neq 0$ and by defining
\begin{equation}\label{eq:DStrongDef}
    \bD_\ts =[\bm{H}^\ts_{\text{c}},\;\bm{H}^\ts_{\text{d}}\bm{b}]
\end{equation}
as the channel component via the direction $\bm{b}$ of only the strong users, the inverted channel gain for the $k$-th user can be written (see Appendix \ref{app:ZFEffectiveGains}) as
     \begin{equation}\label{eq:ZFEffectiveGains}
        \bm{e}_k^{\T}\left(\bH \bH^\Her \right)^{\inv}\bm{e}_k =
        \begin{cases}
            \bm{e}_{k}^{\T}  \bm{C}_\ts^{\inv}  \bm{e}_{k}  \quad &\text{if} \quad k \le K\vspace*{0.2cm}\\
            \frac{1+\bbthet^{\Her} \bD_\ts^{\Her}\bm{C}_\ts^{\inv} \bD_\ts \bbthet}{\abs{\bh_{\tc,K+1}^\Her \bthet}^2} \quad &\text{if} \quad k=K+1.
        \end{cases}
     \end{equation}
     Considering a uniform power allocation $\gamma_k = \frac{P_{\text{Tx}}}{K} = \bar{p}$,
     which is high-\ac{SNR} optimal, we obtain the sum-\ac{SE}
     \begin{equation}\label{eq:SEZF}
        \begin{aligned}
       \text{SE}_{\text{ZF}} &=  \hspace*{-3pt}\summe{k =1}{K}\log_2\left(\hspace*{-2pt}1+\hspace*{-1pt}\hspace*{-1pt}\frac{\bar{p}}{\bm{e}_k^{\T} \bC_\ts^{\inv} \bm{e}_k} \right)\\
       &+ \log_2\left(\hspace*{-2pt}1\hspace*{-1pt}+\hspace*{-1pt}\frac{\bar{p}\abs{\bh_{\tc,K+1}^\Her \bthet}^2}{1+\bbthet^{\Her} \bD_\ts^{\Her}\bm{C}_\ts^\inv\bD_\ts \bbthet }\right)\hspace*{-2pt}.
        \end{aligned}
    \end{equation}
\vspace*{-1.cm}    
\subsection{DPC}
    For \ac{DPC}, we use the dual-\ac{UL} representation 
    \begin{equation}
        \text{SE}_{\text{DPC-General}} = \log_2\det\left(\eye + \bH^\Her\bQ \bH \right)
    \end{equation}
    where choosing the transmit covariance matrix
       $\bQ = \eye \frac{P_{\text{Tx}}}{K} = \eye\bar{p}$
    is high-\ac{SNR} optimal.
    Defining $\lambda_k$ as the eigenvalues of $\bC_\ts$ in decreasing order with the corresponding eigenvectors $\bm{u}_k$, the sum-\ac{SE} for \ac{DPC} is given (see Appendix \ref{app:DPCAsymSE}) by
    \begin{equation}\label{eq:SEDPC}
        \begin{aligned}
           & \text{SE}_{\text{DPC}} =  \summe{k=1}{K}\log_2\left(1 + \lambda_k \bar{p}\right)\\
            & +  \log_2\left(1 + \abs{\bh_{\tc,K+1}^\Her \bthet}^2 \bar{p}+ \hspace*{-3pt}\summe{k=1}{K}\abs{\bbthet^\Her\bD^\Her \bu_k}^2  \hspace*{-3pt} \frac{ \bar{p}}{1+\lambda_k  \bar{p}} \right).\\
        \end{aligned}
    \end{equation}
Having derived the sum-\ac{SE} expressions for linear precoding and \ac{DPC} in \eqref{eq:SEZF} and \eqref{eq:SEDPC}, we will now analyze the implications arising from these two formulations.
It is important to note that both expressions are optimal in the high-\ac{SNR} regime.

\section{High-SNR Analysis of DPC and Linear Precoding}
In the following, we are considering the high-\ac{SNR} regime (for which \eqref{eq:SEZF} and \eqref{eq:SEDPC} are optimal).
Hence, we have \mbox{$\bar{p} \rightarrow \infty$} and obtain the asymptotic expressions of \eqref{eq:SEZF} and \eqref{eq:SEDPC}
\begin{align}
    \label{eq:HSNRZF}
    {\overline{\ac{SE}}}_{\text{Lin}} &= \underbrace{\hspace*{-2pt} \summe{k =1}{K}\log_2\hspace*{-2pt} \left(\frac{\bar{p}}{\bm{e}_k^{\T} \bC_\ts^{\inv} \bm{e}_k} \right)}_{{\overline{\text{SE}}}_{\text{Lin},\td}} 
    \hspace*{-2pt} +\hspace*{-2pt}  \underbrace{\log_2\hspace*{-2pt} \left(\frac{\abs{\bh_{\tc,K+1}^\Her \bthet}^2\bar{p}}{1+\bbthet^{\Her} \bD_\ts^{\Her}\bm{C}_\ts^\inv\bD_\ts \bbthet }\right)}_{\overline{\text{SE}}_{\text{Lin},\tre}},\\
    \label{eq:HSNRDPC}
    {\overline{\ac{SE}}}_{\text{DPC}} &= \underbrace{\log_2\det\left(\bC_{\ts}\bar{p} \right)}_{{\overline{\text{SE}}}_{\text{DPC},\td}} +  \underbrace{\log_2\left(  \abs{\bh_{\tc,K+1}^\Her  \bthet}^2 \bar{p} \right)}_{\overline{\text{SE}}_{\text{DPC},\tre}}
\end{align}
for linear precoding and \ac{DPC}, respectively.
\ac{DPC} is the capacity achieving method, and after a reformulation of \eqref{eq:HSNRDPC}, an important observation can be made.
\needspace{3\baselineskip}
\begin{proposition}
     In the high-\ac{SNR} regime, the \ac{SE} of \ac{DPC} can  be given as
\begin{align}
   \nonumber {\overline{\ac{SE}}}_{\text{DPC}} &= \log_2\det(\bH_\td^\ts \bH_\td^{\ts,\Her} \bar{p})\\
    & + \log_2(\bm{b}^\Her \bm{P}_{{\bH_\ts^\Her}}^{\perp}\bm{b}) + \log_2(\abs{\bh_{\tc,K+1}^\Her \bthet}^2 \bar{p} ).
\end{align}
\end{proposition}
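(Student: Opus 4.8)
The plan is to start from the asymptotic expression \eqref{eq:HSNRDPC} and notice that the reflecting term $\overline{\text{SE}}_{\text{DPC},\tre}=\log_2(\abs{\bh_{\tc,K+1}^\Her\bthet}^2\bar{p})$ already coincides verbatim with the last summand of the claimed identity. Everything therefore reduces to rewriting the direct term $\log_2\det(\bC_\ts\bar{p})$. Recalling $\bm{P}_{\bm{b}}^{\perp}=\eye-\bm{b}\bm{b}^\Her$ together with the definition \eqref{eq:CiDefinition} and using $\norm{\bm{b}}_2=1$, I would first expand
\begin{equation}
  \bC_\ts=\bH_\td^\ts\bm{P}_{\bm{b}}^{\perp}\bH_\td^{\ts,\Her}=\bH_\td^\ts\bH_\td^{\ts,\Her}-(\bH_\td^\ts\bm{b})(\bH_\td^\ts\bm{b})^\Her,
\end{equation}
so that $\bC_\ts$ is a rank-one downdate of the Gram matrix $\bH_\td^\ts\bH_\td^{\ts,\Her}$.

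Next I would apply the matrix determinant lemma to this rank-one update. With $\bm{M}=\bH_\td^\ts\bH_\td^{\ts,\Her}$ and $\bm{v}=\bH_\td^\ts\bm{b}$ this gives
\begin{equation}
  \det(\bC_\ts)=\det(\bm{M})\bigl(1-\bm{v}^\Her\bm{M}^{\inv}\bm{v}\bigr)=\det(\bH_\td^\ts\bH_\td^{\ts,\Her})\bigl(1-\bm{b}^\Her\bm{P}_{\bH_\td^{\ts,\Her}}\bm{b}\bigr),
\end{equation}
where $\bm{P}_{\bH_\td^{\ts,\Her}}=\bH_\td^{\ts,\Her}(\bH_\td^\ts\bH_\td^{\ts,\Her})^{\inv}\bH_\td^\ts$ is the orthogonal projector onto the row space of $\bH_\td^\ts$. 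Since $1-\bm{b}^\Her\bm{P}_{\bH_\td^{\ts,\Her}}\bm{b}=\bm{b}^\Her(\eye-\bm{P}_{\bH_\td^{\ts,\Her}})\bm{b}=\bm{b}^\Her\bm{P}_{\bH_\td^{\ts,\Her}}^{\perp}\bm{b}$, I obtain $\det(\bC_\ts)=\det(\bH_\td^\ts\bH_\td^{\ts,\Her})\,\bm{b}^\Her\bm{P}_{\bH_\td^{\ts,\Her}}^{\perp}\bm{b}$, which is exactly the projector $\bm{P}_{\bH_\ts^\Her}^{\perp}$ appearing in the statement. Because $\bC_\ts\in\cmplx{K\times K}$, the power factor distributes as $\det(\bC_\ts\bar{p})=\bar{p}^K\det(\bC_\ts)$ and likewise for $\bH_\td^\ts\bH_\td^{\ts,\Her}$, while the scalar $\bm{b}^\Her\bm{P}_{\bH_\td^{\ts,\Her}}^{\perp}\bm{b}$ carries no power; taking $\log_2$ and splitting the product then yields $\log_2\det(\bC_\ts\bar{p})=\log_2\det(\bH_\td^\ts\bH_\td^{\ts,\Her}\bar{p})+\log_2(\bm{b}^\Her\bm{P}_{\bH_\td^{\ts,\Her}}^{\perp}\bm{b})$, and adding back the unchanged reflecting term gives the claim.

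The only point requiring care is the invertibility of $\bm{M}=\bH_\td^\ts\bH_\td^{\ts,\Her}$, i.e. that the $K$ strong users' direct channels are linearly independent so that $\bH_\td^\ts$ has full row rank $K$. This is precisely the generic high-\ac{SNR} situation in which all $K$ strong users are served, and it is what guarantees that $\bm{P}_{\bH_\td^{\ts,\Her}}$ is a well-defined orthogonal projector and that the logarithms are finite; should it fail, both $\det(\bC_\ts)$ and $\det(\bH_\td^\ts\bH_\td^{\ts,\Her})$ would vanish and the statement would be degenerate. Apart from this, the derivation is a routine application of the determinant lemma, so I expect no serious technical obstacle. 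The isolated factor $\bm{b}^\Her\bm{P}_{\bH_\td^{\ts,\Her}}^{\perp}\bm{b}$ is the squared norm of the \ac{BS}-\ac{RIS} direction $\bm{b}$ after projecting out the strong users' direct channel, i.e. the orthogonality measure between the \ac{BS}-\ac{RIS} channel and the direct channel highlighted in the introduction, which is the whole motivation for the reformulation.
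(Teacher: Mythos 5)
Your proposal is correct and follows essentially the same route as the paper: the paper's one-line proof also rewrites $\bC_\ts$ as the rank-one downdate $\bH_\td^\ts \bH_\td^{\ts,\Her} - \bH_\td^\ts \bm{b}\bm{b}^\Her\bH_\td^{\ts,\Her}$ and applies the matrix determinant lemma to split off $\log_2(1-\bm{b}^\Her\bH_\td^{\ts,\Her}(\bH_\td^\ts\bH_\td^{\ts,\Her})^{\inv}\bH_\td^\ts\bm{b}) = \log_2(\bm{b}^\Her\bm{P}_{\bH_\td^{\ts,\Her}}^{\perp}\bm{b})$. Your additional remarks on the $\bar{p}^K$ bookkeeping and the full-row-rank requirement for $\bH_\td^\ts$ are correct details the paper leaves implicit.
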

\begin{proof}
    This result can be directly obtained from \eqref{eq:HSNRDPC} by rewriting $\log_2\det(\bC_\ts) = \log_2\det(\bH_\td^\ts \bH_\td^{\ts,\Her} - \bH_\td^\ts \bm{b}\bm{b}^\Her\bH_\td^{\ts,\Her} ) = \log_2\det(\bH_\td^\ts \bH_\td^{\ts,\Her}) + \log_2(1- \bm{b}^\Her\bH_\td^{\ts,\Her} (\bH_\td^{\ts} \bH_\td^{\ts,\Her})^{\inv}\bH_\td^\ts \bm{b})$.
\end{proof}
This reformulation is important as the expression matches the intuition in the sense that the users with non-negligible direct channels are served via the \ac{BS} corresponding to the term $\log_2\det(\bH_\td^\ts \bH_\td^{\ts,\Her} \bar{p})$.
Furthermore, when including the \ac{RIS}, one additional user with negligible direct channel is served via the \ac{RIS} corresponding to the term $\log_2(\abs{\bh_{\tc,K+1}^\Her \bthet}^2 \bar{p} )$.
However, we further have the term $\log_2(\bm{b}^\Her \bm{P}_{\bH_\td^\ts}^{\perp}\bm{b})$, which highlights the importance of the \ac{BS}-\ac{RIS} channel being orthogonal to the direct channels.
This is an important observation which is analyzed in more detail in Section \ref{subsec:OrthGHd}.

The \ac{SE} for linear precoding leads to a fundamentally different structure.
However, we can see that the expression in \eqref{eq:HSNRZF} still has a special structure in the sense that the strong users $k=1,\dots,K$ do not depend on the reflecting elements which will be discussed in the following.

It is known that zero-forcing precoding is optimal in the high-\ac{SNR} regime.
Hence, the interference has to be completely cancelled by precoding at the \ac{BS}.
Considering user $K+1$, the corresponding channel is given by $\bh_{K+1} = \bh_{\tc,K+1}^\Her \bthet \bm{b}^\Her \in \cmplx{1 \times \NB}$ as this user has no direct channel.
To achieve interference cancellation, all precoders of the strong users $1,2,\dots,K$ have to be orthogonal to the channel $\bh_{K+1}$, i.e., the direction $\bm{b}$ via the \ac{RIS}.
Hence, the strong users are spatially separated and we obtain the structure in \eqref{eq:HSNRZF} as all precoders for the users $1,2,\dots,K$ are chosen orthogonal to the channel via the \ac{RIS}.
From this spatial separation, it additionlly follows that the channel part via the direction $\bm{b}$, i.e., $\bH_\td^\ts\bm{b}\bm{b}^\Her$ and $\bH_\tc^\ts \bthet \bm{b}^\Her$, is decremental for the strong users.
This is taken care of by the precoder of user $K+1$ and the negative impact is expressed in the channel mitigation term $\bbthet^{\Her} \bD_\ts^{\Her}\bm{C}_\ts^\inv\bD_\ts \bbthet$.
This problem does not arise for \ac{DPC} as it can additionally cancel the interference by coding.
The scenario is graphically illustrated in Fig. \ref{fig:Szenario}.
Please note that the figure is a theoretical illustration according to the notation of \eqref{eq:Cascaded} and the physical channel from the \ac{BS} to the \ac{RIS} is the ${\NRe \times \NB}$ dimensional channel $\bm{G}$ and not the direction $\bm{b}^\Her \in \cmplx{1\times\NB}$.

Hence, fundamental differences betweeen \ac{DPC} and linear precoding arise which are stated in the following theorem.
\needspace{3\baselineskip}
\begin{theorem}
    \mbox{In the high-\ac{SNR}} regime, the difference between \ac{DPC} and linear precoding can be given by 
    \begin{equation}
        \Delta\overline{\text{SE}} = {\overline{\ac{SE}}}_{\text{DPC}}-{\overline{\ac{SE}}}_{\text{Lin}}
    \end{equation}
which can be further written as 
\begin{align}
    \Delta\overline{\text{SE}} = \underbrace{{\overline{\text{SE}}}_{\text{DPC},\td} - {\overline{\text{SE}}}_{\text{Lin},\td}}_{\Delta{\overline{\text{SE}}}_{\td}} + \underbrace{{\overline{\text{SE}}}_{\text{DPC},\tre} - {\overline{\text{SE}}}_{\text{Lin},\tre}}_{\Delta{\overline{\text{SE}}}_{\tre}}
\end{align}
with the expressions
\begin{align}\label{eq:DeltadRate}
        \Delta{\overline{\text{SE}}}_{\td} &= \log_2\det\left(\bC_{\ts} \right) - \summe{k =1}{K}\log_2\left( (\bm{e}_k^{\T} \bC_\ts^{\inv} \bm{e}_k)^{\inv} \right),\\
        \label{eq:DeltarRate} \Delta{\overline{\text{SE}}}_{\tre} &= \log_2(1+\bbthet^{\Her} \bD_\ts^{\Her}\bm{C}_\ts^\inv\bD_\ts \bbthet ).
    \end{align}
\end{theorem}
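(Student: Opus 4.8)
The plan is to substitute the high-\ac{SNR} expressions \eqref{eq:HSNRZF} and \eqref{eq:HSNRDPC} directly into $\Delta\overline{\text{SE}} = {\overline{\ac{SE}}}_{\text{DPC}} - {\overline{\ac{SE}}}_{\text{Lin}}$ and to exploit the fact that both spectral efficiencies are already written additively as a direct part (subscript~$\td$) plus a reflecting part (subscript~$\tre$). Because of this, regrouping the four terms into the two componentwise differences $\Delta{\overline{\text{SE}}}_{\td} = {\overline{\text{SE}}}_{\text{DPC},\td} - {\overline{\text{SE}}}_{\text{Lin},\td}$ and $\Delta{\overline{\text{SE}}}_{\tre} = {\overline{\text{SE}}}_{\text{DPC},\tre} - {\overline{\text{SE}}}_{\text{Lin},\tre}$ is immediate, so the additive split stated in the theorem holds by construction. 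The actual content is then to verify that each of these two differences collapses to the claimed closed form in \eqref{eq:DeltadRate} and \eqref{eq:DeltarRate}.

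For the direct component, I would first use that $\bC_\ts$ is $K\times K$, so that $\log_2\det(\bC_\ts\bar{p}) = K\log_2\bar{p} + \log_2\det(\bC_\ts)$. Expanding the logarithm of the quotient in ${\overline{\text{SE}}}_{\text{Lin},\td}$ likewise produces a $K\log_2\bar{p}$ term together with $-\summe{k=1}{K}\log_2(\bm{e}_k^{\T}\bC_\ts^{\inv}\bm{e}_k)$. Upon subtraction, the $K\log_2\bar{p}$ contributions cancel, and rewriting $\log_2(\bm{e}_k^{\T}\bC_\ts^{\inv}\bm{e}_k) = -\log_2((\bm{e}_k^{\T}\bC_\ts^{\inv}\bm{e}_k)^{\inv})$ turns the surviving sum into exactly \eqref{eq:DeltadRate}.

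For the reflecting component the simplification is even more direct. The argument of the logarithm in ${\overline{\text{SE}}}_{\text{DPC},\tre}$ and the numerator of the argument in ${\overline{\text{SE}}}_{\text{Lin},\tre}$ share the common factor $\abs{\bh_{\tc,K+1}^\Her\bthet}^2\bar{p}$; inside the difference of logarithms this factor cancels and only the denominator survives, yielding $\log_2(1 + \bbthet^{\Her}\bD_\ts^{\Her}\bC_\ts^{\inv}\bD_\ts\bbthet)$, which is \eqref{eq:DeltarRate}.

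I do not expect a genuine obstacle, as the statement is an algebraic consequence of the previously derived high-\ac{SNR} expressions. The only point deserving care is the $\bar{p}^{K}$ scaling of the $K\times K$ determinant: it is precisely this factor that makes the power-dependent $K\log_2\bar{p}$ terms cancel between \ac{DPC} and linear precoding in the direct part, leaving $\Delta\overline{\text{SE}}$ finite and independent of $\bar{p}$, as the decomposition requires.
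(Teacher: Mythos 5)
Your proposal is correct and takes essentially the same route as the paper, which states this theorem as an immediate algebraic consequence of the high-SNR expressions \eqref{eq:HSNRZF} and \eqref{eq:HSNRDPC} without a separate proof. Your explicit verification — the cancellation of the $K\log_2\bar{p}$ terms via $\log_2\det(\bC_\ts\bar{p}) = K\log_2\bar{p} + \log_2\det(\bC_\ts)$ in the direct part and the cancellation of the common factor $\abs{\bh_{\tc,K+1}^\Her\bthet}^2\bar{p}$ in the reflecting part — is exactly the intended computation.
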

The discussion of this theorem is separated into Section \ref{subsec:OrthGHd} for \eqref{eq:DeltadRate} and into Section \ref{subsubsec:InflRefl} for \eqref{eq:DeltarRate}.

\vspace*{0.5cm}
\begin{figure}[t!]
    \centering
        \includegraphics*[scale=1]{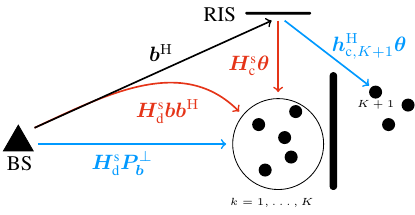}
        \caption{High-SNR scenario for ZF. In case of DPC, $\bH_\tc^\ts\bthet\bm{b}^\Her$ and $\bH_\td^\ts \bm{b}\bm{b}^\Her$ are cancelled by coding.}
        \label{fig:Szenario}
    \end{figure}
\subsection{Orthogonality of $\bm{G}$ with the strong users' direct channels}\label{subsec:OrthGHd}
We start by analyzing  $\overline{\ac{SE}}_{\text{DPC},\td}$ and $ \overline{\ac{SE}}_{\text{Lin},\td}$ which are not depending on the phases $\bthet$.
Interestingly, these are exactly the asymptotic expressions for \ac{DPC} and zero-forcing for a conventional system without the \ac{RIS} (\hspace*{-0.1cm}\cite{ZFDPCCompare}, \cite{HighSNRInstant}) and the term $\Delta{\overline{\text{SE}}}_{\td}$ in \eqref{eq:DeltadRate} is exactly the corresponding high-\ac{SNR} offset (see \cite[eq. (9)]{HighSNRInstant}).
The only difference to the conventional system in the expressions above is that instead of the Gram channel matrix $\bH_{\td}^{\text{s}}\bH_{\td}^{\text{s},\Her}$, we have the matrix $\bC_\ts=\bH_{\td}^{\text{s}}\bm{P}_{\bm{b}}^{\perp}\bH_{\td}^{\text{s},\Her}$.
Hence, the direction $\bm{b}$, corresponding to the \ac{BS}-\ac{RIS} channel, is excluded.
Excluding this direction is a major difference and deteriorates the performance.
Also, in this case, \ac{DPC} has the typical advantage over linear precoding in the sense that
\begin{equation}
    {\overline{\ac{SE}}}_{\text{DPC},\td}\ge\overline{\ac{SE}}_{\text{Lin},\td}
\end{equation}
holds with equality if $\bC_\ts$ is diagonal (see, e.g., \cite{HighSNRInstant}).
However, in comparison to the conventional case, for $\bC_\ts$ to be diagonal, not only the strong users' direct channels have to be mututally orthogonal but they also have to be orthogonal to $\bm{b}$.

For example, if $\bm{b}$ were orthogonal to the strong direct channels, i.e., $\bm{b} \in \mathrm{null}(\bH_\td^{\ts})$,  the orthogonal projector completely vanishes and we have $\bC_\ts = \bH_{\td}^{\text{s}}\bH_{\td}^{\text{s},\Her}$.
Therefore, we arrive at the conventional asymptotic expression and the negative impact of $\bm{P}_{\bm{b}}^{\perp}$ completely vanishes which is clearly a best-case scenario.

On the contrary, in case the direction $\bm{b}$ (the \ac{BS}-\ac{RIS} channel) would lie within the space spanned by the strong channels, i.e., $\bm{b} \in \mathrm{range}(\bH_\td^{\ts,\Her})$, we have a worst-case scenario in which an eigenvalue of $\bH_{\td}^{\text{s}}\bm{P}_{\bm{b}}^{\perp}\bH_{\td}^{\text{s},\Her}$ completely vanishes.
To see this, we rewrite the matrix $\bC_\ts$ as
\begin{equation}
    \bC_\ts  = \bU_\ts \bm{\Sigma}_\ts (\eye - \bV^\Her_\ts\bm{b}\bm{b}^\Her\bV_\ts)\bm{\Sigma}_\ts  \bU_\ts^\Her  
\end{equation}
where we use the \ac{SVD} $ \bH_{\td}^\ts = \bU_\ts \bm{\Sigma}_\ts \bV_\ts^\Her$.
Because $\bm{b} \in \mathrm{range}(\bH_\td^{\ts,\Her})$, we have $\bV_\ts\bV_\ts^\Her \bm{b} = \bm{b}$ and, therefore, $\bm{b}^\Her\bV_\ts\bV_\ts^\Her \bm{b}=1$.
It follows that $\norm{\bV_\ts\bm{b}}_2 =1$ and $\bm{P}^{\perp}_{ \bV_\ts\bm{b}}=\eye - \bV_\ts\bm{b}\bm{b}^\Her\bV_\ts^\Her\in \cmplx{K \times K}$ is another orthogonal projector.
However, this time, the projector $\bm{P}^{\perp}_{ \bV_\ts\bm{b}}$ is $K \times K$ instead of $\NB \times \NB$ and one singular value of the strong users is completely canceled.
A complete stream is missing in this case and the transmission is clearly deteriorated.
In a practical scenario, the orthogonality highly depends on the ratio of base station antennas $\NB$ to the number of users with non-negligible direct channel $K$.
In case $\NB \approx K +1$, the performance will be clearly degraded in comparison to $\NB \gg K+1$, where the channels are likely to be close to orthogonal.
In summary, the orthogonality of the \ac{BS}-\ac{RIS} channel with the strong users' direct channels is important for the performance in a \ac{RIS}-aided scenario and is a necessary condition for \ac{DPC} and linear precoding to perform equally.
\subsection{Channel Mitigation}\label{subsec:Inf}

We focus now on the expressions $\overline{\ac{SE}}_{\text{DPC},\tre}$ and $\overline{\ac{SE}}_{\text{Lin},\tre}$ which contain the phase manipulations at the \ac{RIS}.
Interestingly, \ac{DPC} and linear precoding have the same expressions with the major difference that the channel mitigation term $\bbthet^{\Her} \bD_\ts^{\Her}\bm{C}_\ts^\inv\bD_\ts \bbthet$ completely disappears in case of \ac{DPC}, resulting in $\Delta{\overline{\text{SE}}}_{\tre}$ in \eqref{eq:DeltarRate}.
This is an important observation, meaning that improving the weak user's channel gain is only optimal in the case of \ac{DPC}; for linear precoding, a more sophisticated solution has to be constructed which takes into account the dependence of the other users.
\subsubsection{Influence of the Direct Channels}
From the mitigation term we can see that not only the channel from the \ac{RIS} to the strong users $\bH_\tc^\ts\bthet$ but the sum together with the direct channel component $\bH_\td^\ts\bm{b}$ has a negative impact on the system performance.
This is important as this means that even when the strong users are significantly separated from the weak user, the mitigation term is still non-zero and we would experience a performance drop in $\overline{\ac{SE}}_{\text{Lin},\tre}$.
For example, if the reflecting channels within the mitigation term could be completely negelected (this refers to $\bm{H}_{\text{c}}^\ts\bthet = \bm{0}$ ) and, hence, the \ac{RIS} had only impact on the user $K+1$,
the mitigation term is still non-zero and reads as 
\begin{align}\label{eq:interferencenoreflecting}
    1 + \bbthet^{\Her} \bD_\ts^{\Her}\bm{C}_\ts^\inv\bD_\ts \bbthet &=  \frac{1}{\bm{b}^\Her \bm{P}_{{\bH_\td^{\ts,\Her}}}^{\perp}\bm{b}}
 \end{align}
 which can be shown by the matrix inversion lemma.
Hence, if the \ac{BS}-\ac{RIS} channel is not orthogonal to the direct channel, the mitigation given in \eqref{eq:interferencenoreflecting} could take any positive value and the performance is deteriorated for linear precoding.
The impact of the reflecting channels can in this regard be actually beneficial as $\bH_\tc^\ts\bthet$ could compensate to an extent for $\bH_\td^\ts\bm{b}$.
\subsubsection{Influence of the Reflecting Channels}\label{subsubsec:InflRefl}
The drawback of linear precoding will be analyzed more deeply in the following based on the dependence of the reflecting elements.
This drawback is especially pronounced in the case where $\bthet$ is chosen independently of the mitigation term in the denominator (this holds for the important cases when only the numerator, i.e., the weak user, is maximized or statistical/random phases are considered).
The focus is on this scenario where the advantage of \ac{DPC} is especially pronounced.
We can directly bound the \ac{SE} of linear precoding by dropping the mitigation term as
\begin{equation}
    \overline{\text{SE}}_{\text{Lin,\tre}} \le  {\log_2\left(\abs{\bh_{\tc,K+1} ^\Her \bthet}^2 \bar{p} \right)} = \overline{\text{SE}}_{\text{DPC},\tre}
\end{equation}
and, hence, also for this part, \ac{DPC} is always leading to a higher rate than linear precoding.
This bound is of course only tight when the mitigation term is negligible in which the two schemes perform similarly.
However, when the mitigation does play a role (which happens, e.g., in the important case where \mbox{$\NRe \rightarrow \infty$}), a tighter bound has to be derived.
In order to derive a new bound, we first establish the following lemma for the exponential integral.
\begin{lemma}\label{lemma:ExpIntegral}
    For $x>0$, the exponential integral $E_1(x)$ can be lower bounded by 
    \begin{equation}
        E_1(x)e^x > \ln\left(1+\frac{e^{-\gamma}}{x}\right), \quad \forall x >0.
    \end{equation}
\end{lemma}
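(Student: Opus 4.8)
The plan is to reduce the inequality to a statement about the sign of
$\phi(x) = e^x E_1(x) - \ln\!\left(1 + e^{-\gamma}/x\right)$ on $(0,\infty)$, and to control that sign through the elementary first-order ODE satisfied by $e^x E_1(x)$. Writing $h(x) = e^x E_1(x)$ with $E_1(x) = \int_x^\infty e^{-t}/t\,dt$, differentiation together with $E_1'(x) = -e^{-x}/x$ gives the linear ODE $h'(x) = h(x) - 1/x$. The two standard facts about $E_1$ that I will invoke are $h(x)\to 0$ as $x\to\infty$ (from $E_1(x)\sim e^{-x}/x$) and the expansion $E_1(x) = -\gamma - \ln x + O(x)$ as $x\to 0^+$, whence $h(x) = -\gamma - \ln x + O(x\ln x)$. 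Since $\ln(1+e^{-\gamma}/x) = -\gamma - \ln x + O(x)$ as well, the singular parts of the two terms of $\phi$ match exactly, so $\phi(0^+)=0$; likewise $\phi(\infty)=0$. The goal thus becomes: show $\phi>0$ on the open interval, knowing it vanishes at both ends.

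Next I would convert the ODE for $h$ into one for $\phi$. With $g(x)=\ln(1+e^{-\gamma}/x)$ one has $g'(x) = -e^{-\gamma}/[x(x+e^{-\gamma})]$, and $h'=h-1/x$ gives $\phi'(x)-\phi(x) = g(x)-g'(x)-1/x =: R(x)$. The key simplification is that the partial-fraction identity $e^{-\gamma}/[x(x+e^{-\gamma})] = 1/x - 1/(x+e^{-\gamma})$ cancels the $1/x$ singularity, leaving the clean forcing term
\[
R(x) = \ln\!\left(1 + \frac{e^{-\gamma}}{x}\right) - \frac{1}{x + e^{-\gamma}}.
\]
Integrating $(e^{-x}\phi)' = e^{-x}R$ from $x$ to $\infty$ and using $\phi(\infty)=0$ yields the representation $\phi(x) = -e^x\int_x^\infty e^{-s}R(s)\,ds$, so everything now hinges on the sign of $R$.

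I would then show that $R$ changes sign exactly once, from positive to negative. Direct differentiation gives $R'(x) = [\,x(1-e^{-\gamma}) - e^{-2\gamma}\,]/[\,x(x+e^{-\gamma})^2\,]$, whose numerator is linear and strictly increasing in $x$; hence $R$ is strictly decreasing then strictly increasing with a single interior minimum. Combined with $R(0^+)=+\infty$ and $R(\infty)=0^-$ (the latter because the leading term is $(e^{-\gamma}-1)/x<0$), this forces $R>0$ on $(0,s_1)$ and $R<0$ on $(s_1,\infty)$ for a unique $s_1$. Consequently the auxiliary integral $I(x) = -\int_x^\infty e^{-s}R(s)\,ds = e^{-x}\phi(x)$ satisfies $I'(x)=e^{-x}R(x)$, so $I$ is unimodal (increasing then decreasing). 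Since $\phi$ vanishes at both endpoints we have $I(0^+)=I(\infty)=0$, and unimodality pins $I$ strictly above both endpoint values, giving $I(x)>0$ and hence $\phi(x)>0$ on all of $(0,\infty)$, which is the claim after multiplying through.

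The main obstacle is precisely that the forcing $R$ is \emph{not} sign-definite: any attempt to bound $\int_x^\infty e^{-s}R(s)\,ds$ by discarding part of the integrand, or to compare the two sides as integrals of the decreasing kernel $1/(x+t)$ against measures of unequal total mass, fails because the kernel's monotonicity weights the negative region too heavily. The resolution is to exploit structure rather than magnitudes: the unimodality of $I$ (from the single sign change of $R$) together with the exact vanishing of $\phi$ at both $x\to 0^+$ and $x\to\infty$ leaves positivity as the only possibility. I would take particular care with the $x\to 0^+$ limit, since it relies on the delicate cancellation of the $1/x$ singularities, both inside $R$ and in the matching of the $-\gamma-\ln x$ leading behaviour of $e^x E_1(x)$ against that of $\ln(1+e^{-\gamma}/x)$.
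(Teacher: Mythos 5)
Your proof is correct and is essentially the paper's argument in different notation: your $e^{-x}\phi(x)$ is exactly the paper's difference function $g(x)=E_1(x)-e^{-x}\ln(1+e^{-\gamma}/x)$, your forcing term $R(x)$ is exactly the paper's $\tilde g'(x)$, and both proofs conclude by showing this function has a single sign change (via the same derivative computation) so that $g$ is unimodal and vanishes at both endpoints. No substantive difference to report.
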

\begin{proof}
    See Appendix \ref{app:ExpIntNewLowerBound}.
\end{proof}
Using Lemma \ref{lemma:ExpIntegral}, we can state the following upper bound.
\needspace{8\baselineskip}
\begin{theorem}\label{theorem:UpperboundGeneral}
    Under the  assumption of Rayleigh fading for the reflective channels $\bh_{\tre,k} \sim \gaussdist{\bm{0}}{\bm{R}_{\tre,k}} \; k\le K$,
     expression $\mathbb{E}[{\overline{\text{SE}}_{\text{Lin,\tre}}}]$ from \eqref{eq:HSNRZF} can be upper bounded as
\begin{equation}\label{eq:SELinInfBound}
    \expct{\overline{\text{SE}}_{\text{Lin,\tre}}} \le \expct{\log_2\left(\frac{\abs{\bh_{\tc,K+1}^\Her \bthet}^2\bar{p}}{   e^{-\gamma}\summe{k=1}{K}\frac{\bthet^\Her \bm{R}_{\tc,k}\bthet}{\tr(\bm{R}_{\td,k})}  }\right)},\\
\end{equation}
where $\gamma$ is the Euler-Mascheroni constant, $\bm{R}_{\tdk}$ are the covariance matrices of the direct channels, and 
\begin{equation}
    \bm{R}_{\tc,k} = \expct{\bh_{\tck}\bh_{\tck}^{\Her}} = \diag(\bm{a}^*)\bm{R}_{\tre,k}  \diag(\bm{a}) L_{\text{G}} \NB
\end{equation}
are the covariance matrices of the cascaded channels.
\end{theorem}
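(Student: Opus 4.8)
The plan is to prove the theorem by producing a lower bound on $\expct{\log_2(1+\bbthet^{\Her}\bD_\ts^{\Her}\bC_\ts^{\inv}\bD_\ts\bbthet)}$, since from \eqref{eq:HSNRZF} we have $\overline{\text{SE}}_{\text{Lin,\tre}} = \log_2(\abs{\bh_{\tc,K+1}^\Her\bthet}^2\bar{p}) - \log_2(1+\bbthet^{\Her}\bD_\ts^{\Her}\bC_\ts^{\inv}\bD_\ts\bbthet)$, so that any lower bound on the expected logarithm of the mitigation term maps directly onto the claimed upper bound. I would condition on $\bthet$ throughout (it is independent of the fading in the random/statistical cases of interest) and deal with the outer expectation over $\bthet$ at the end. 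Fixing the direct channels and treating the reflective channels as the only randomness, the vector $\bD_\ts\bbthet = \bm{H}_{\tc}^{\ts}\bthet + \bm{H}_{\td}^{\ts}\bm{b}$ is complex Gaussian with mean $\bm{H}_{\td}^{\ts}\bm{b}$ and (under independent users) diagonal covariance $\bm{\Sigma} = \diag(\bthet^\Her\bm{R}_{\tc,k}\bthet)$, so the mitigation term $\norm{\bC_\ts^{-1/2}\bD_\ts\bbthet}_2^2$ is a non-central generalized chi-squared whose central counterpart has eigenweights equal to the eigenvalues $\mu_j$ of $\bC_\ts^{\inv}\bm{\Sigma}$.

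The core reduction then proceeds in three steps so that Lemma~\ref{lemma:ExpIntegral} becomes applicable. First, since a non-central chi-squared variable stochastically dominates its central version, dropping the mean $\bm{H}_{\td}^{\ts}\bm{b}$ only decreases $\expct{\log_2(1+\cdot)}$, which removes the direct-channel shift from $\bD_\ts\bbthet$. Second, the remaining central form $\sum_j\mu_j\abs{w_j}^2$ with i.i.d.\ exponential $\abs{w_j}^2$ is dominated in convex order by the single exponential $S\abs{w}^2$ with $S = \sum_j\mu_j = \tr(\bC_\ts^{\inv}\bm{\Sigma})$ (a weighted average of i.i.d.\ variables is less spread), and because $\log_2(1+\cdot)$ is concave this collapse again only lowers the expectation. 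Third, for the single exponential one has the standard identity $\expct{\ln(1+S\abs{w}^2)} = e^{1/S}E_1(1/S)$, and Lemma~\ref{lemma:ExpIntegral} with $x=1/S$ gives $e^{1/S}E_1(1/S) > \ln(1+e^{-\gamma}S) > \ln(e^{-\gamma}S)$; this is precisely where the exponential-integral bound enters and where the factor $e^{-\gamma}$ is produced.

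It then remains to simplify $S$ and to integrate out the direct channels. Using $(\bC_\ts^{\inv})_{kk}\ge 1/(\bC_\ts)_{kk}$ together with $(\bC_\ts)_{kk} = \bh_{\td,k}^\Her\bm{P}_{\bm{b}}^{\perp}\bh_{\td,k}\le\norm{\bh_{\td,k}}_2^2$, I would lower bound $S = \sum_k(\bthet^\Her\bm{R}_{\tc,k}\bthet)(\bC_\ts^{\inv})_{kk} \ge \sum_k\bthet^\Her\bm{R}_{\tc,k}\bthet/\norm{\bh_{\td,k}}_2^2$, which finally generates the sum over all strong users. Taking the expectation over the direct channels, I would invoke the joint convexity of $(\tau_1,\dots,\tau_K)\mapsto\ln\sum_k(\bthet^\Her\bm{R}_{\tc,k}\bthet)/\tau_k$ on $\mathbb{R}_{>0}^K$ (which follows from a Cauchy--Schwarz estimate on its Hessian) and apply Jensen's inequality to replace each $\norm{\bh_{\td,k}}_2^2$ by its mean $\tr(\bm{R}_{\td,k})$, using only the second moment and not any distributional assumption on the direct channels; convexity makes Jensen point in the lower-bounding direction. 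Chaining all inequalities yields $\expct{\log_2(1+\bbthet^{\Her}\bD_\ts^{\Her}\bC_\ts^{\inv}\bD_\ts\bbthet)} \ge \log_2\left(e^{-\gamma}\sum_k\bthet^\Her\bm{R}_{\tc,k}\bthet/\tr(\bm{R}_{\td,k})\right)$, and substituting this into $\overline{\text{SE}}_{\text{Lin,\tre}}$ gives \eqref{eq:SELinInfBound}.

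The hard part is the second paragraph: the mitigation term is a non-central generalized chi-squared in the reflective channels rather than a single exponential, so Lemma~\ref{lemma:ExpIntegral} cannot be applied directly, and a naive Jensen step would point the wrong way since $\expct{\ln(\cdot)}\le\ln\expct{(\cdot)}$. The two stochastic-ordering reductions, namely stochastic dominance to eliminate the non-central mean and convex order to collapse the weighted exponential sum to a single exponential with matched mean $S$, are exactly what preserve the inequality direction while making the lemma usable. A secondary but essential subtlety is that the random direct-channel norms must be transferred to their statistical means in the lower-bounding direction, which succeeds only because $\ln\sum_k a_k/\tau_k$ is jointly convex in $\bm{\tau}$.
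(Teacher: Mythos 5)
Your proposal is correct, and it reaches the paper's bound by a genuinely different route. The paper first eliminates the direct-channel shift $\bH_\td^\ts\bm{b}$ and the projector in $\bC_\ts$ by an explicit matrix-inversion-lemma computation (completing a square so that $1+\bbthet^\Her\bD_\ts^\Her\bC_\ts^\inv\bD_\ts\bbthet \ge \bthet^\Her\bH_\tc^{\ts,\Her}(\bH_\td^\ts\bH_\td^{\ts,\Her})^\inv\bH_\tc^\ts\bthet$), then reduces the quadratic form to $\sum_k|h_k|^2/\tr(\bm{R}_{\td,k})$ via the weighted harmonic--arithmetic mean inequality plus Jensen applied to $\expct{\bm{\tilde{H}}_\td^\ts\bm{\tilde{H}}_\td^{\ts,\Her}}=\eye$, and finally handles the sum of independent scaled $\chi^2(2)$ variables by a $K$-step iterated application of Lemma~\ref{lemma:ExpIntegral}, one exponential at a time. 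You instead keep the full non-central generalized chi-squared form and use two stochastic-ordering reductions: usual stochastic dominance to discard the non-centrality (replacing the paper's algebraic square-completion), and convex order of a weighted average of i.i.d.\ exponentials below a single matched-mean exponential, which lets you invoke $\expct{\ln(1+S|w|^2)}=e^{1/S}E_1(1/S)$ and apply Lemma~\ref{lemma:ExpIntegral} exactly once; you then recover the trace structure via $(\bC_\ts^\inv)_{kk}\ge 1/(\bC_\ts)_{kk}\ge 1/\|\bh_{\td,k}\|_2^2$ and transfer to $\tr(\bm{R}_{\td,k})$ at the very end using the joint convexity of $\bm{\tau}\mapsto\ln\sum_k a_k/\tau_k$ (a log-sum-exp composition), whereas the paper's Jensen step happens earlier, on the Gram matrix. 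All your inequality directions check out, and the constants coincide with \eqref{eq:SELinInfBound}; your single-application-of-the-lemma argument is arguably cleaner than the paper's induction over users, at the cost of invoking convex-order machinery. Two small points to make explicit if you write this up: both arguments need $\bthet$ to be independent of the strong users' reflective channels (which holds for random, statistical, and weak-user-aligned phases, not only the random/statistical cases you mention) and need those channels to be mutually independent across users so that your $\bm{\Sigma}$ is diagonal — the paper uses the same implicit assumptions.
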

\begin{proof}
    See Appendix \ref{app:LinearUpperBoundErgodicRate}.
\end{proof}
It is important to note that the channel distribution of the direct channels $\bh_{\tdk}, \; \forall \, k=1,\dots,K$ as well as the channel distribution of the weak user $\bh_{\tre,K+1}$ are abitrary and \mbox{Theorem \ref{theorem:UpperboundGeneral}} holds for all distributions.
Especially, the user $K+1$ which is served by the \ac{RIS} can certainly have a \ac{LOS} component and, hence, it is important that the bound also holds when $\bh_{\tre,K+1}$ is, e.g., Rician distributed.  

In the following, we analyze Theorem \ref{theorem:UpperboundGeneral} under the assumption that all channels follow i.i.d. Rayleigh fading, i.e., $\bh_{\td/\trek} \sim \gaussdist{\bm{0}}{{L_{\td/\trek}} \eye} \; \forall k=1,\dots,K+1$ where ${L_{\td/\trek}} $ is the pathloss of user $k$.
\begin{corollary}
    For random/statistical phase shifts when assuming i.i.d. Rayleigh fading for the channels $\bh_{\td/\trek} \sim \gaussdist{\bm{0}}{{L_{\td/\trek}} \eye} \; \forall k=1,\dots,K+1$, the \ac{SE} of linear precoding is upper bounded by
    \begin{equation}\label{eq:SINRIIDRand}
        \expct{\overline{\ac{SE}}_{\text{Lin},\tre}} \le  \log_2\left(\NB  \frac{L_{\tre,K+1}   \bar{p} }{  \summe{k=1}{K} \frac{L_{\tre,k}}{L_{\td,k}} }\right)
    \end{equation}
    which is independent of $\NRe$.
    For \ac{DPC}, on the other hand, the \ac{SE} increases monotonically with $\NRe$ according to 
    \begin{equation}
        \expct{\overline{\ac{SE}}_{\text{DPC},\tre}} =  {\log_2\left( e^{-\gamma}L_G L_{\tre,K+1} \NB \NRe \bar{p} \right)}.\\
    \end{equation}
\end{corollary}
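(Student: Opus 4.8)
The plan is to specialize the general upper bound of Theorem~\ref{theorem:UpperboundGeneral} to the i.i.d.\ Rayleigh model for the linear part, and to evaluate the expectation of \eqref{eq:HSNRDPC} directly for the DPC part. Both computations reduce to a single elementary fact: if $X$ is exponentially distributed with mean $\mu$, then $\expct{\ln X}=\ln\mu-\gamma$, where $\gamma$ is the Euler--Mascheroni constant. This identity is the source of the $e^{-\gamma}$ factor in the DPC expression and, decisively, of its cancellation in the linear bound.

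First I would translate the i.i.d.\ covariances into the quantities entering the bound of Theorem~\ref{theorem:UpperboundGeneral}. With $\bm{R}_{\tre,k}=L_{\tre,k}\eye$, the cascaded covariance is $\bm{R}_{\tc,k}=L_{\tre,k}L_G\NB\diag(\bm{a}^*)\diag(\bm{a})$. Because $\bthet$ is unit-modulus and $\norm{\bm{a}}_2^2=\NRe$, the quadratic form collapses to $\bthet^\Her\bm{R}_{\tc,k}\bthet=L_{\tre,k}L_G\NB\NRe$ for \emph{any} admissible phase configuration. Together with $\tr(\bm{R}_{\td,k})=L_{\td,k}\NB$, the denominator in \eqref{eq:SELinInfBound} then becomes the deterministic quantity $e^{-\gamma}L_G\NRe\sum_{k=1}^{K}L_{\tre,k}/L_{\td,k}$.

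For the numerator of \eqref{eq:SELinInfBound}, I would observe that, conditioned on $\bthet$, the scalar $\bh_{\tc,K+1}^\Her\bthet$ is complex Gaussian with variance $\bthet^\Her\bm{R}_{\tc,K+1}\bthet=L_{\tre,K+1}L_G\NB\NRe$; by the same argument this variance is independent of the particular phases, so $\abs{\bh_{\tc,K+1}^\Her\bthet}^2$ is exponential with that mean regardless of how $\bthet$ is drawn. The exponential-log identity then gives $\expct{\log_2\abs{\bh_{\tc,K+1}^\Her\bthet}^2}=\log_2\!\big(e^{-\gamma}L_{\tre,K+1}L_G\NB\NRe\big)$. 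Substituting this numerator expectation and the deterministic denominator into \eqref{eq:SELinInfBound}, the factors $e^{-\gamma}$, $L_G$, and $\NRe$ cancel, leaving exactly \eqref{eq:SINRIIDRand}, which no longer depends on $\NRe$.

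The DPC claim is then immediate: taking the expectation of $\overline{\ac{SE}}_{\text{DPC},\tre}=\log_2(\abs{\bh_{\tc,K+1}^\Her\bthet}^2\bar{p})$ from \eqref{eq:HSNRDPC} and reusing the same identity yields $\log_2(e^{-\gamma}L_GL_{\tre,K+1}\NB\NRe\bar{p})$, whose $\log_2\NRe$ term makes the growth in $\NRe$ monotone. The main obstacle I anticipate is conceptual rather than technical: one must resist bounding the numerator by Jensen's inequality, which would leave a spurious $e^{\gamma}$ factor and a strictly looser statement. The exact evaluation of the expected log of the exponential variable is precisely what makes the $e^{-\gamma}$ in the denominator cancel and produces the clean bound. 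A minor point worth verifying is that the rotational invariance of the i.i.d.\ model renders every second-order quantity independent of the phase design, which is exactly why the same bound applies to both random and statistical phases.
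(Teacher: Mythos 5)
Your proposal is correct and follows essentially the same route as the paper's Appendix E: the paper also specializes Theorem~2 by noting that $\bthet^\Her\bm{R}_{\tc,k}\bthet=L_{\tre,k}L_G\NB\NRe$ for any unit-modulus $\bthet$, and evaluates $\expct{\log_2\abs{\bh_{\tc,K+1}^\Her\bthet}^2}$ exactly via the $\chi^2(2)$ (equivalently, exponential) log-moment $\psi(1)=-\gamma$, so that the $e^{-\gamma}$, $L_G$, and $\NRe$ factors cancel. Your remark that Jensen on the numerator must be avoided to obtain the clean cancellation is exactly the point of the paper's exact evaluation.
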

\begin{proof}
    See Appendix \ref{app:LinearDPCUpperBoundErgodicRateiidRayleighRandom}.
\end{proof}
It is important to note that both, random and statistical phase shifts, lead to the same expression for the upper bound in the case of i.i.d Rayleigh fading.
Hence, both random and statistical phase shifts have a clear limitation in the sense that the \ac{SE} is bounded by a term which does not depend on the number of reflecting elements $\NRe$.
Therefore, the mitigation of the users $k\le K$ on the user $K+1$ can be so large that there is no improvement anymore when increasing the number of reflecting elements $\NRe$.
This is clearly different from \ac{DPC} for which ${{\text{SE}}_{\text{DPC},\tre}}$ scales linearly inside the logarithm w.r.t. $\NRe$ for random/statistical phase shifts.

Additionally, when solely maximizing the channel gain of the weak user $K+1$ we obtain the following result.
\begin{corollary}
    When solely optimizing the channel gain of the weak user $K+1$, i.e., choosing $\bthet = \exp(\mathrm{j}\arg(\bh_{\tc,K+1}))$, then, under the assumption of i.i.d. Rayleigh fading for the channels $\bh_{\td/\trek} \sim \gaussdist{\bm{0}}{{L_{\td/\trek}} \eye} \; \forall k=1,\dots,K+1$, the \ac{SE} of linear precoding is upper bounded by
    \begin{equation}\label{eq:SINRIIDWeakINST}
        \expct{\overline{\ac{SE}}_{\text{Lin},\tre}} \le  \log_2\left(  \frac{\pi e^{\gamma}}{4} \NRe \NB\frac{L_{\tre,K+1}   \bar{p} }{  \summe{k=1}{K} \frac{L_{\tre,k}}{L_{\td,k}} }\right)
    \end{equation}
    whereas for \ac{DPC} the \ac{SE} is lower bounded according to 
    \begin{equation}
        \expct{\overline{\ac{SE}}_{\text{DPC},\tre}} \ge  {\log_2\left( e^{-\gamma}L_G L_{\tre,K+1} \NB \NRe^2 \bar{p} \right)}.\\
    \end{equation}
\end{corollary}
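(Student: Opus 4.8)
The plan is to establish the two inequalities by separate routes: the linear-precoding upper bound descends from Theorem~\ref{theorem:UpperboundGeneral}, while the \ac{DPC} lower bound is obtained directly from the asymptotic expression $\overline{\text{SE}}_{\text{DPC},\tre} = \log_2(\abs{\bh_{\tc,K+1}^\Her\bthet}^2\bar{p})$ in \eqref{eq:HSNRDPC}. The common first observation is that the phase choice $\bthet = \exp(\mathrm{j}\arg(\bh_{\tc,K+1}))$ combines the weak user's cascaded channel coherently, so that $\bh_{\tc,K+1}^\Her\bthet = \summe{n=1}{\NRe}\abs{h_{\tc,K+1,n}}$ and hence $\abs{\bh_{\tc,K+1}^\Her\bthet}^2 = \left(\summe{n=1}{\NRe}\abs{h_{\tc,K+1,n}}\right)^2$. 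Under i.i.d.\ Rayleigh fading with unit-modulus steering entries (so $\bthet^\Her\bthet = \NRe$), I would also record the simplifications $\bm{R}_{\tc,k} = L_G\NB L_{\tre,k}\eye$, $\tr(\bm{R}_{\td,k}) = \NB L_{\td,k}$, and $\abs{h_{\tc,K+1,n}}^2 = L_G\NB\abs{h_{\tre,K+1,n}}^2$, where $\abs{h_{\tre,K+1,n}}^2$ is exponential with mean $L_{\tre,K+1}$.

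For the linear-precoding bound I would start from \eqref{eq:SELinInfBound}. Since $\bthet$ depends only on $\bh_{\tc,K+1}$, which is independent of the strong users' reflective channels, Theorem~\ref{theorem:UpperboundGeneral} applies conditionally and therefore survives the outer expectation. Inserting the i.i.d.\ covariances collapses the denominator to $e^{-\gamma}L_G\NRe\summe{k=1}{K}\frac{L_{\tre,k}}{L_{\td,k}}$ and the numerator to $L_G\NB R^2\bar{p}$ with $R = \summe{n=1}{\NRe}\abs{h_{\tre,K+1,n}}$. The only surviving random quantity inside the logarithm is $R$, and I would apply Jensen's inequality (concavity of the logarithm) in the favorable direction, $\expct{\log_2 R^2} = 2\expct{\log_2 R}\le 2\log_2\expct{R}$, using that $\expct{R} = \NRe\sqrt{\pi L_{\tre,K+1}/4}$ is a sum of $\NRe$ Rayleigh means. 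Collecting the constants reproduces \eqref{eq:SINRIIDWeakINST}.

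The \ac{DPC} lower bound is the more delicate part, because here Jensen's inequality points the wrong way. The key idea is to pass from the arithmetic to the geometric mean of the magnitudes through the AM--GM inequality, $\summe{n=1}{\NRe}\abs{h_{\tc,K+1,n}}\ge\NRe\left(\produkt{n=1}{\NRe}\abs{h_{\tc,K+1,n}}\right)^{1/\NRe}$, which gives $\abs{\bh_{\tc,K+1}^\Her\bthet}^2\ge\NRe^2\left(\produkt{n=1}{\NRe}\abs{h_{\tc,K+1,n}}^2\right)^{1/\NRe}$. Taking the logarithm converts the product into a sum, and linearity of expectation together with the i.i.d.\ assumption reduces everything to the single-entry term $\expct{\ln\abs{h_{\tc,K+1,n}}^2} = \ln(L_G\NB L_{\tre,K+1}) - \gamma$, which follows from the identity $\expct{\ln X} = \ln\expct{X} - \gamma$ for an exponential $X$. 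Assembling the pieces yields $\expct{\overline{\text{SE}}_{\text{DPC},\tre}}\ge\log_2(e^{-\gamma}L_G L_{\tre,K+1}\NB\NRe^2\bar{p})$.

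I expect the main obstacle to be securing a \emph{lower} bound of the correct order $\NRe^2$ for \ac{DPC}: retaining only one summand destroys the quadratic scaling, while Jensen delivers only an upper bound. The AM--GM step is precisely what preserves the $\NRe^2$ factor while producing the constant $e^{-\gamma}$, which sits cleanly below the $\pi/4$ that a mean-based estimate would suggest; recognizing this inequality is the crux of the argument. A secondary point worth verifying carefully is the conditioning argument that legitimizes invoking Theorem~\ref{theorem:UpperboundGeneral} for the channel-dependent phase $\bthet$.
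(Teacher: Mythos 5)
Your proposal is correct and follows essentially the same route as the paper: the linear bound is obtained by specializing Theorem~\ref{theorem:UpperboundGeneral} to i.i.d.\ covariances and applying Jensen's inequality to the coherently combined weak-user gain (yielding the $\frac{\pi}{4}\NRe^2$ factor), and the DPC lower bound uses exactly the AM--GM step you identify as the crux (the paper phrases it as Jensen applied to $\log_2$ of the normalized sum of magnitudes, which is the same inequality) together with $\mathbb{E}[\ln \xi] = -\gamma$ for the normalized $\chi^2(2)$ variables. The only cosmetic difference is that you apply Jensen to $\mathbb{E}[\log_2 R]$ and square, whereas the paper applies it to $\mathbb{E}[\log_2 R^2]$ and cites the second moment; both yield the same constant.
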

\begin{proof}
    See Appendix \ref{app:LinearUpperBoundErgodicRateiidRayleighInst} and Appendix \ref{app:DPCUpperBoundsInst}. 
\end{proof}
We see again a clear limitation for linear precoding in the sense that solely optimizing the weak user's channel gain is upper bounded by a rate expression which scales only linearly with $\NRe$ inside the logarithm.
For \ac{DPC}, this is the optimal solution and the \ac{SE} scales quadratically within the logarithm.

In summary, it is important to suppress the mitigation term, which is possible in case of instantaneous \ac{CSI} for $\NRe > K$.
Under statistical \ac{CSI}, suppressing the mitigation is only possible if the sum of the covariance matrices $\bm{R}_{\tc,k}$ of the strong users do not span the whole space $\cmplx{\NRe \times \NRe}$ which can only happen for correlated Rayleigh fading.
This analysis, however, will be subject to future work. 
In this article, only i.i.d. Rayleigh fading is considered in which linear precoding has a major disadvantage when statistical or random phase shifts are considered.

\vspace*{-0.0cm}
\section{Equivalence of DPC and Linear Precoding}
Taking into account all the results of the last section, under the assumption of orthogonal strong direct users' channels $\bH_\td^\ts$, an orthogonal \ac{BS}-\ac{RIS} channel ($\bm{b} \in \mathrm{null}(\bH_\td^{\ts})$), and that the strong users are considerable far away from the \ac{RIS} such that the reflections can be neglected,
\ac{DPC} and linear precoding perform equally w.r.t. the sum-\ac{SE}.
This can also be seen by rewriting the composite channel matrix as [cf. \eqref{eq:CompositeChannelMatrixDefinition}]
\begin{equation}
     \bH = \bH_\td + \bH_\tc \bthet  \bm{b}^\Her = \begin{bmatrix}
        \bm{\Sigma}_\ts & \bH_\tc^\ts \bthet\\
        \bm{0}&  \bh_{\tc,K+1}^\Her\bthet 
     \end{bmatrix}
     \begin{bmatrix}
         \bV_\ts^{,\Her}\\
         \bm{b}^\Her
     \end{bmatrix}
\end{equation}
where $\bU_\ts = \eye$ of the \ac{SVD} of $\bH_\td^\ts = \bU_\ts \bm{\Sigma}_\ts \bV_\ts^\Her$ as the direct channel users are orthogonal.
Because $\bH_\tc^\ts \bthet$ is negligible by assumption and $\bm{b} \perp \bV^{\ts}$, the channel matrix is orthogonal.
These are a lot of assumptions (which are more likely to be fulfilled for large $\NB$), and in general, \ac{DPC} will lead to superior performance.
\vspace*{-0.0cm}
\section{Results}
\label{sec:Results}
For the simulations we consider, similar to \cite{LOSZeroForc}, one \ac{BS} at (0, 0, 10)$\,\text{m}$ with $\NB$ antennas together with an RIS at (100, 0, 10)$\,\text{m}$ which has $\NRe$ reflecting elements.
We assume $K+1=4$ single-antenna users on a height of $1.5\,\text{m}$ which are uniformly distributed in a circle with radius $5\,\text{m}$ centered at (95, 10, 1.5)$\,\text{m}$.
We use the logarithmic pathloss model $ L_{\text{dB}} = \alpha + \beta 10\log_{10}(\frac{d}{\mathrm{m}})$ for all channels where $d$ is the distance in meter.
 The values for $\alpha$ and $\beta$ are chosen as in \cite{Relay} which are according to the 3GPP model.
For the pathloss, we choose
     $L_{\text{dB},\text{weak}} = 35.1 + 36.7\log_{10}(d/\text{m})$
for the direct channel.
For the channel between the \ac{RIS} and the users we assume a stronger channel and use 
    $L_{\text{dB},\text{strong}} = 37.51 + 22\log_{10}(d/\text{m})$.
The weak user which should be served via the \ac{RIS} has a significantly degraded direct channel (additional pathloss of $60$ dB).
For the direct and the \ac{RIS}-user channels, we assume i.i.d. Rayleigh fading whereas for the \ac{BS}-\ac{RIS} channel, we assume a \ac{LOS} channel corresponding to the outer product of two half-wavelength \ac{ULA} vectors where the \ac{AoA} and the \ac{AoD} are both given by $\frac{\pi}{2}$.
The pathloss for the \ac{LOS} \ac{BS}-\ac{RIS} channel is given by $L_{\text{dB},\text{LOS}} = 30 + 22\log_{10}(d/\text{m})$. 
In all simulations, a noise power of $\sigma^2 = -110\,\text{dBm}$ is used.

\begin{figure}[t!]
    \flushleft
    \hspace*{17pt}
    \vspace*{-16pt}
    \includegraphics*[scale=1]{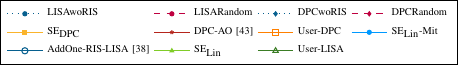}
    \flushleft

    \hspace*{-6pt}

        \centering
        \includegraphics*[scale=1]{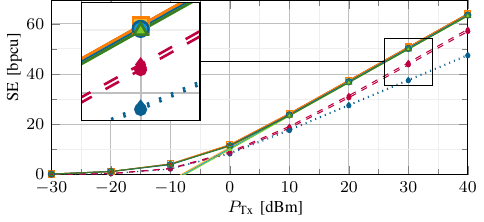}

            \label{fig:SNRPlot}

    
    \caption{ \ac{SE} evaluation with $\NB=12$ and $\NRe=64$.}
    \label{fig:SNRPlot}

\end{figure}

We evaluate the asymptotic sum-\ac{SE} expressions \eqref{eq:HSNRZF} and \eqref{eq:HSNRDPC} derived in this article where the optimal solution of \eqref{eq:HSNRDPC} w.r.t. $\bthet$ is given in closed-form by $\exp(\mathrm{j}\angle(\bh_{\tc,K+1}))$, referred to as \textbf{$\overline{\text{SE}}_{\text{DPC}}$}.
For \eqref{eq:HSNRZF}, we choose $\exp(\mathrm{j}\angle(\bh_{\tc,K+1}))$ as an intialization to the element-wise algorithm of \cite{Secrecy}, \cite{IRSLISA} that takes the mitigation term into account, referred to as \textbf{$\overline{\text{SE}}_{\text{Lin}}$-Mit}.
We compare this with simply taking the initial solution and neglecting the mitigation, referred to as \textbf{$\overline{\text{SE}}_{\text{Lin}}$}.
As a comparison, we use the \textbf{\ac{DPC}-AO} of \cite{MaxSumRateJour} with 10 random initial phase shifts of which the best solution is taken.
For linear precoding, we choose the \textbf{AddOne-RIS-LISA} algorithm of \cite{LOSZeroForc}, which converges to zero-forcing if $\NB \ge K+1$ and $P_{\text{Tx}} \rightarrow \infty$.
Additionally, we evaluate two user-based methods in which the phases of the \ac{RIS} are aligned such that the channel gain of a particular user $k$ is maximized.
This is done for all users and conventional \ac{DPC} as well as conventional linear precoding (with \ac{LISA}) is used for all the $k=1,2,\dots,K+1$ phase shifts.
Afterward, the best sum-\ac{SE} is taken and we obtain \textbf{User-DPC} and \textbf{User-LISA}, respectively.
For linear precoding, we use \ac{LISA} \cite{OriginalLISA}, which converges to zero-forcing for $\NB \ge K+1$ and $P_{\text{Tx}} \rightarrow \infty$.
Additionally, we compare all the methods with random phase shifts as well as the situation where no \ac{RIS} is present.
Both are evaluated for \ac{DPC} as well as \ac{LISA} and we obtain \textbf{LISAwoRIS}, \textbf{DPCwoRIS}, \textbf{LISARandom}, and \textbf{DPCRandom}.

In Fig. \ref{fig:SNRPlot}, we can see that all algorithms match the high-\ac{SNR} expressions after around $10\,\text{dBm}$.
Additionally, we can observe that the slope of the curves, when including the \ac{RIS}, are higher since, additionally, the weak user is allocated.
Furthermore, all optimized algorithms show similar performance as we already have $\NB=12$ \ac{BS} antennas.


\begin{figure}[t!]
    \flushleft
    \hspace*{18.25pt}
    \vspace*{-21pt}
    \includegraphics*[scale=1]{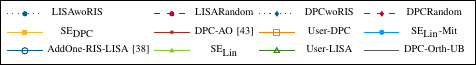}

    \flushleft

    \subfigure[Orth. BS-RIS Channel]{
        \hspace*{-8pt}

        \includegraphics*[scale=1]{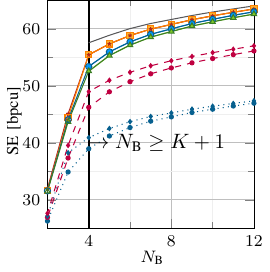}

    \label{fig:Influenceofborth}

    }\hspace*{-10pt}\subfigure[Non-Orth. BS-RIS Channel]
    {
        \includegraphics*[scale=1]{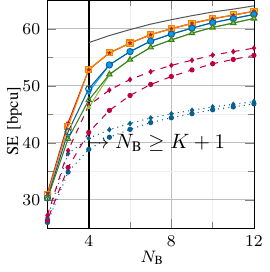}

            \label{fig:Influenceofbnonorth}

    }
    \caption{ \ac{SE} evaluation with $\NRe=64$ and $P_{\text{Tx}}=40\,\text{dBm}$.}
    \label{fig:Influenceofb}


\end{figure}

In Fig. \ref{fig:Influenceofb} we further analyze the impact of the number of \ac{BS} antennas.
Here, we additionally included \textbf{DPC-Orth-UB} which is $\text{SE}_{\text{DPC}}$ for an orthogonal \ac{BS}-\ac{RIS} channel as well as orthogonal direct user channels and, hence, serves as an upper bound.
In Fig \ref{fig:Influenceofborth}, we artificially set the vector $\bm{b}$ to be orthogonal to the strong users' direct channels whereas in Fig. \ref{fig:Influenceofbnonorth}, we don't have this artificial scenario and, therefore, the \ac{BS}-\ac{RIS} channel is not orthogonal to the direct channel.
While for $\NB \rightarrow \infty$ the algorithms converge to the same values in both plots, we can see that an orthogonal \ac{BS}-\ac{RIS} channel is clearly beneficial for the performance (especially for lower $\NB$ values), which is particularly pronounced for linear precoding.


\begin{figure}[b!]
    \vspace*{-0.4cm}
    \flushleft
    \hspace*{18.5pt}
    \vspace*{-24pt}
    \includegraphics*[scale=1]{Figures/Figureseffekt-figure_crossref2.pdf}

    \flushleft

    \subfigure[Algorithms]{
        \hspace*{-8pt}
        \includegraphics*[scale=1]{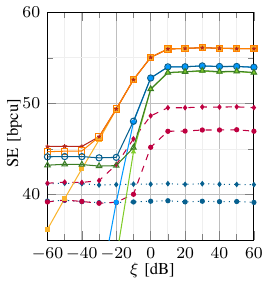}

    \label{fig:OrthogonalityReal}

    }\hspace*{-10pt}\subfigure[$\overline{\text{SE}}_{\text{DPC/Lin}}$]
    {
        \includegraphics*[scale=1]{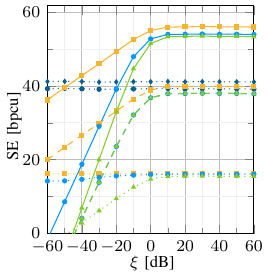}

            \label{fig:OrthogonalitySplit}

    \vspace{-5pt}
    }
    \caption{ \ac{SE} evaluation with $P_{\text{Tx}} = 40\,\text{dBm}$, $\NRe = 64$, and $\NB = 4$.}
    \label{fig:Orthogonality}

\end{figure}

In Fig. \ref{fig:Orthogonality}, we further analyze the orthogonality of the \ac{BS}-\ac{RIS} channel for $\NB=K+1=4$.
We express the orthogonality by constructing $\bm{b}^\prime = \frac{\bV_\ts \bm{1}}{\norm{\bV_\ts \bm{1}}_2} + \xi \frac{\bv^{\perp}}{\norm{\bv^{\perp}}_2} $ where $\bv^{\perp}$ is orthogonal to $\mathrm{range}(\bV_\ts)$.
Afterwards $\bm{b}$ is obtained by normalizing $\bm{b}^\prime$, i.e., $\bm{b} = \frac{\bm{b}^\prime}{\norm{\bm{b}^\prime}_2}$.
We can see in \ref{fig:OrthogonalityReal} that an orthogal \ac{BS}-\ac{RIS} channel is clearly beneficial for all methods.
The linear precdoing methods are especially sensitive for this orthogonality in comparsion to the \ac{DPC} based schemes which are more robust in this regard.

We further analyze this behavior in Fig. \ref{fig:OrthogonalitySplit} by splitting $\overline{\text{SE}}_{\text{DPC/Lin}}$(-Mit) (\textbf{solid}) into  $\overline{\text{SE}}_{\text{DPC/Lin},\td}$(-Mit) (\textbf{dashed}) and $\overline{\text{SE}}_{\text{DPC/Lin},\tre}$(-Mit) (\textbf{dotted}), see \eqref{eq:HSNRZF} and \eqref{eq:HSNRDPC}.
It is apparent that $\overline{\text{SE}}_{\text{DPC/Lin},\td}$ clearly depends on the orthogonality, especially for the linear precoding methods.
For high values of $\xi$, the orthogonal projector of $\bC_\ts = \bH_\td^\ts \bm{P}_{\bm{b}}^{\perp} \bH_\td^{\ts,\He}$ vanishes and we have the same expression as in case of only the direct channels.
The offset of $\overline{\text{SE}}_{\text{DPC/Lin},\td}$ to the direct channel for large values of $\xi$ is only due to the fact that the power is divided over $K+1=4$ users instead of $K=3$ users and, hence, we obtain the offset $K \log_2(\frac{K+1}{K}) = 1.2451$ bpcu.
On the other hand, when considering the terms $\overline{\text{SE}}_{\text{Lin/DPC},\tre}$, only the linear methods depend on $\bm{b}$.
When taking the mitigation into account, this dependence is very small, as in this scenario the \ac{RIS} has engough impact on the strong users to cancel this dependence.
However, the interfernce has to be taken actively into account as can be seen when only maximizing the weak user's channel gain.
Here, $\overline{\text{SE}}_{\text{Lin},\tre}$ significantly depends on the orthogonality as the direction $\bH_\td^\ts \bm{b}$ is not compensated when maximizing only the weak user's channel gain. 
This results in a clear degradation when the \ac{BS}-\ac{RIS} channel is not orthogonal.

\begin{figure}[t!]
    \flushleft
    \hspace*{18pt}
    \includegraphics*[scale=1]{Figures/Figureseffekt-figure_crossref2.pdf}
    \flushleft
    \vspace*{-21pt}

    \subfigure[Algorithms]{
        \hspace*{-8pt}
        \includegraphics*[scale=1]{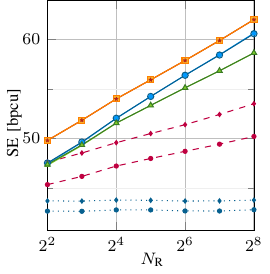}

    \label{fig:ReflectingAlgo}

    }\hspace*{-10pt}\subfigure[ $\overline{\text{SE}}_{\text{DPC}/\text{Lin},\tre}$]
    {
        \includegraphics*[scale=1]{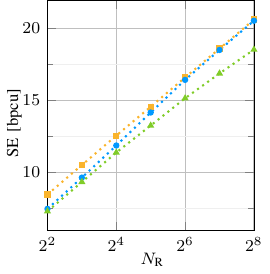}

            \label{fig:ReflectingSplit}

    }
    \vspace{-8pt}
    \caption{ \ac{SE} evaluation with $P_{\text{Tx}} = 40\,\text{dBm}$ and $\NB = 6$.}
    \label{fig:ReflectingPlot}

\end{figure}

In Fig. \ref{fig:ReflectingPlot}, we further investigate the ability to compensate for the channel part $\bH_\td^\ts \bm{b}$ depending on the number of reflecting elements $\NRe$.
At first, we focus on Fig. \ref{fig:ReflectingAlgo} where we can see that there is a clear gap between \ac{DPC} and linear precoding.
For random phase shifts as well as when neglecting the mitigation term, this gap is significantly larger as in case of having solely the direct channels.
This is differnet for $\overline{\text{SE}}_{\text{Lin}}$-Mit which for a low number of reflecting elements performs equal to $\overline{\text{SE}}_{\text{Lin}}$, whereas for a high number of elements, the gap to \ac{DPC} is significantly reduced.

We analyze this behavior in \ref{fig:ReflectingSplit} where we only plot $\overline{\text{SE}}_{\text{DPC}/\text{Lin},\tre}$. 
Note that $\overline{\text{SE}}_{\text{DPC}/\text{Lin},\td}$ does not depend on the reflecting elements and, hence, is constant over $\NRe$.
We can see that if the impact of the reflecting channel is low, we have $ 1 + \bbthet^{\Her} \bD_\ts^{\Her}\bm{C}_\ts^\inv\bD_\ts \bbthet =  1/\bm{b}^\Her \bm{P}_{{\bH_\ts^\Her}}^{\perp}\bm{b}$ according to eq. \eqref{eq:interferencenoreflecting} and maximizing the weak user's channel gain is optimal.
Hence, $\overline{\text{SE}}_{\text{Lin},\tre}$ and  $\overline{\text{SE}}_{\text{Lin},\tre}$-Mit are overlapping and a clear gap to \ac{DPC} exists.
For a higher number of reflecting elements the reflecting channel has more impact and actually helps to compensate for $\bH_\td^\ts \bm{b}$.
Therefore, $\overline{\text{SE}}_{\text{Lin},\tre}$-Mit gains a lot of performance over $\NRe$ and is very close to  $\overline{\text{SE}}_{\text{DPC},\tre}$.
On the other hand, the increasing impact of the reflecting channel becomes problematic when the mitigation is neglegted and we can already see a saturation effect of  $\overline{\text{SE}}_{\text{Lin},\tre}$ for a higher number of reflecting elements.

In Fig. \ref{fig:Bound}, this saturation effect is investigated.
To clearly see the problem of the mitigation term, we assume the direct channels to have an additional pathloss of $20\,\text{dB}$.
In Fig \ref{fig:BoundAlgo} we can see that the mitigation term plays a fundamental role for the system performance.
For random  and statistical phase shifts, the \ac{SE} stays constant whereas solely maximizing the weak user's channel gain scales only linearly w.r.t. $\NRe$ inside the logarithm.
This is further illustrated by plotting the bounds in \eqref{eq:SINRIIDRand} and \eqref{eq:SINRIIDWeakINST} which explain these effects.
In Fig. \ref{fig:BoundSplit} this is further supported by recognizing that $\overline{\text{SE}}_{\text{Lin},\tre}$ is clearly limited when the mitigation is not taken into account.
On the other hand, when the mitigation is taken into account, we can expect the same scaling behavior as \ac{DPC}.

\begin{figure}[t!]
    \flushleft
    \hspace*{18pt}
    \vspace*{-21pt}
    \includegraphics*[scale=1]{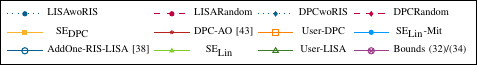}
    \flushleft

    \subfigure[Algorithms/Bounds]{
        \hspace*{-8pt}

        \includegraphics*[scale=1]{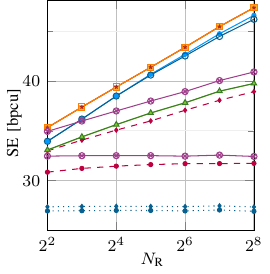}

    \label{fig:BoundAlgo}

    }\hspace*{-10pt}\subfigure[$\overline{\text{SE}}_{\text{DPC/Lin}}$]
    {
        \includegraphics*[scale=1]{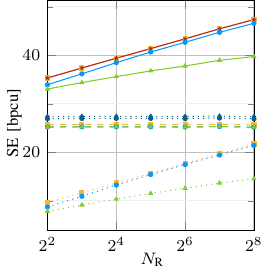}

            \label{fig:BoundSplit}

    \vspace{-5pt}
    }
    \caption{ \ac{SE} evaluation with $P_{\text{Tx}} = 40\,\text{dBm}$ and $\NB = 12$. Additional $20\,\text{dB}$ pathloss for the direct channels.}
    \label{fig:Bound}

\end{figure}

\vspace*{-0.3cm}
\section{Conclusion}
We have analyzed a scenario in which a group of strong users and a group of weak users are supported by a \ac{RIS} with a \ac{LOS} dominated \ac{BS}-\ac{RIS} channel.
Interestingly, linear precoding differs significantly from \ac{DPC}, i.e., that it suffers from an mitigation term which completely vanishes in case of \ac{DPC}.
This is especially pronounced when considering random/statistical phase shifts under i.i.d. Rayleigh fading.
A number of assumptions have to be fulfilled for \ac{DPC} and linear precoding to perform equally, e.g., the \ac{BS}-\ac{RIS} channel has to be orthogonal to the users with non-negligible direct channels which is generally important for the performance.
In future works, it is analyzed how these results can be extended to a scenario with an increased rank of the \ac{BS}-\ac{RIS} channel.
\vspace*{-0.2cm}

\appendix
\subsection{Effective Channel Gains of ZF}\label{app:ZFEffectiveGains}
We start by rewriting the Gram channel matrix 
\begin{align}
    \bm{H}_{}\bm{H}_{}^{\He} &= \begin{bmatrix}
        \bm{C}_\ts & \bm{0}\\
        \bm{0}^\transpo& 0
    \end{bmatrix} + \bm{D} \bbthet \bbthet^\Her \bm{D}^\Her.
\end{align}
From the definitions \eqref{eq:DiDefinition} and \eqref{eq:DStrongDef} it follows that
\begin{equation}
    \bD\bbthet = \begin{bmatrix}
        \bD_\ts\bbthet\\
        \bh_{\text{c},K+1}^\Her \bthet
    \end{bmatrix}
\end{equation}
and we can express the Gram channel matrix as
\begin{equation}
    \bH\bH^\Her =     \begin{bmatrix}
        \bm{C}_\ts+\bm{D}_\ts \bbthet \bbthet^\Her \bm{D}_\ts^\Her & \bm{D}_\ts \bbthet\bthet^\Her  \bh_{\text{c},K+1}\\
       \bh_{\text{c},K+1}^\Her\bthet\bbthet^\Her \bm{D}_\ts^\Her&  \abs{\bh_{\text{c},K+1}^\Her\bthet}^2
    \end{bmatrix} .
\end{equation}
Applying the block matrix inversion, we have 
\begin{align}
  \nonumber
   & (\bH\bH^\Her)^{\inv}=\begin{bmatrix}
        \bm{C}_\ts+\bm{D}_\ts \bbthet \bbthet^\Her \bm{D}_\ts^\Her & \bm{D}_\ts \bbthet   \bthet^\Her  \bh_{\text{c},K+1}\\
       \bh_{\text{c},K+1}^\Her\bthet\bbthet^\Her \bm{D}_\ts^\Her&  \abs{\bh_{\text{c},K+1}^\Her\bthet}^2
    \end{bmatrix}^{\inv}\\
    &= 
    \begin{bmatrix}
        \bm{C}_\ts^{\inv} & -\bm{C}_\ts^{\inv}\bm{D}_\ts \bbthet \frac{ \bthet^\Her  \bh_{\text{c},K+1}}{ \abs{\bh_{\text{c},K+1}^\Her\bthet}^2}\\
       -\frac{\bh_{\text{c},K+1}^\Her\bthet}{\abs{\bh_{\text{c},K+1}^\Her\bthet}^2}\bbthet^\Her \bm{D}_\ts^\Her\bm{C}_\ts^{\inv}& \frac{1+\bbthet^{\Her} \bD_\ts^{\Her}\bm{C}_\ts^{\inv} \bD_\ts \bbthet}{\abs{\bh_{\tc,K+1}^\Her \bthet}^2}
    \end{bmatrix}
\end{align}
and, hence, we can give the inverted channel gains for zero forcing $\bm{e}_k^\transpo (\bH\bH^\Her)^{\inv} \bm{e}_k$ according to \eqref{eq:ZFEffectiveGains}. 
\subsection{Asymptotic SE of DPC}\label{app:DPCAsymSE}
Using \eqref{eq:HHDttDDef} and defining 
\begin{equation}
    \bC = \begin{bmatrix}
        \bm{C}_\ts & \bm{0}\\
        \bm{0}^\transpo& 0
    \end{bmatrix},
\end{equation} we can write the \ac{SE} in case of \ac{DPC} as
\begin{align}
    \nonumber  &\text{SE}_{\text{DPC}} = \log_2\det\left(\eye + \bH^\Her\bQ \bH \right)\\
        \nonumber   &=\log_2\det\left(\eye + \bH\bH^\Her \bar{p}\right)\\
        \nonumber   &=  \log_2\det\left(\eye + \bC \bar{p}+\bD \bbthet \bbthet^\Her \bD^\Her\bar{p}\right)\\
         \nonumber   &=  \log_2\det\left(\eye + \bC\bar{p}\right)\\
        \nonumber   &\quad +  \log_2\det\left(\eye + \left(\eye + \bC \bar{p}\right)^{\inv}\bD \bbthet\bbthet^\Her \bD^\Her \bar{p}\right)\\
        \nonumber   &=  \log_2\det\left(\eye + \bC_\ts\bar{p}\right)\\
        \nonumber    &\quad +  \log_2\left(1 +\bbthet^\Her \bD^\Her \left(\eye + \bC \bar{p}\right)^{\inv}\bD \bbthet   \bar{p}\right)\\
                \nonumber  &=  \summe{k=1}{K}\log_2\left(1 + \lambda_k \bar{p}\right) \\
        \nonumber  &\quad+  \log_2\left(1 + \summe{k=1}{K+1}\abs{\bbthet^\Her \bD^\Her \bu_k}^2\hspace*{-4pt}\frac{ \bar{p}}{1+\lambda_k  \bar{p}} \right)\hspace*{-3pt}\\
        \nonumber &=  \summe{k=1}{K}\log_2\left(1 + \lambda_k \bar{p}\right) \\
      &\quad+  \log_2\left(1 +\abs{\bh_{\tc,K+1}^\Her \bthet}^2 \bar{p}+ \summe{k=1}{K}\abs{\bbthet^\Her \bD^\Her \bu_k}^2\hspace*{-4pt}\frac{ \bar{p}}{1+\lambda_k  \bar{p}} \right)\hspace*{-3pt}
    \end{align}
where $\lambda_k$ denotes the $k$-th eigenvalue of $\bC$ and \mbox{$\abs{\bh_{\tc,K+1}^\Her \bthet}^2 \neq 0$}.

\subsection{A Lower Bound for the Exponential Integral}\label{app:ExpIntNewLowerBound}
There are popular bounds for the exponential integral (see \cite[p. 229, eq. (5.1.19) and (5.1.20)]{Abramo} and \cite{ExpInt}, \cite{ExpIntThree}), however, all of them are not suited for our purposes in this article.
Specifically, the problem with the bound of \cite{ExpInt}
\begin{equation}
    E_1(x)e^x > \frac{1}{2}\ln\left(1+\frac{2}{x}\right), \quad \forall x >0
\end{equation}
is the prefactor of $0.5$ outside the logarithm.
In \cite[eq. (12)]{ExpIntTwo} there has been given a bound without this factor 
\begin{equation}\label{eq:ExpIntConstNotApplicable}
    E_1(x)e^x > -\gamma + \ln\left(1+\frac{1}{x}\right), \quad \forall x >0.
\end{equation}
where $\gamma$ is the Euler-Mascheroni constant.
However, also this bound is not applicable for our problem because of the constant term $-\gamma$. 
Hence, we introduce a new lower bound for the exponential integral which is tighter than the one in \eqref{eq:ExpIntConstNotApplicable} for all $x >0$.
In particular, we prove the relationship
\begin{equation}\label{eq:ExpIntGoal}
    E_1(x)e^x > \ln\left(1+\frac{e^{-\gamma}}{x}\right), \quad \forall x >0
\end{equation}
and we have 
\begin{equation}
    \ln\left(e^{-\gamma}+\frac{e^{-\gamma}}{x}\right) < \ln\left(1+\frac{e^{-\gamma}}{x}\right) , \quad \forall x >0
\end{equation}
when comparing the bound with \eqref{eq:ExpIntConstNotApplicable}.

To show \eqref{eq:ExpIntGoal}, we introduce the difference 
\begin{equation}\label{eq:DiffEq}
   g(x) =  E_1(x) - e^{-x}\ln\left(1+\frac{e^{-\gamma}}{x}\right)
\end{equation}
which we show to be positive for all positive $x$.
By exploiting the property $E_1^\prime(x) = - E_0^\prime(x) = -\frac{e^{-x}}{x}$  (see \cite[p. 230, eq. (5.1.26) and p. 229 eq. (5.1.24)]{Abramo}), the first-order derivative $g^\prime(x)$ of $g(x)$ is given by 
\begin{align*}
    g^\prime(x) = e^{-x}\underbrace{\left(\ln\left(1+\frac{e^{-\gamma}}{x}\right) - \frac{1}{e^{-\gamma} + x}\right)}_{\tilde{g}^\prime(x)}.
\end{align*}
Taking the derivative of $\tilde{g}^\prime(x)$, we obtain 
\begin{equation}
    \frac{\mathrm{d}}{\mathrm{d}x} \tilde{g}^\prime(x)  = \frac{-e^{-2\gamma} + x(1-e^{-\gamma})}{x(x+e^{-\gamma})^2}.
\end{equation}
This expression is equal to zero for $x = \frac{e^{-2\gamma}}{1-e^{-\gamma}}\approx 0.719$, negative for $x < \frac{e^{-2\gamma}}{1-e^{-\gamma}}$ and positive for $x > \frac{e^{-2\gamma}}{1-e^{-\gamma}}$.
It follows, that $ \tilde{g}^\prime(x)$ is monotonically decreasing for $x < \frac{e^{-2\gamma}}{1-e^{-\gamma}}$, having a minimum at $x = \frac{e^{-2\gamma}}{1-e^{-\gamma}}$ and then monotonically increasing for $x > \frac{e^{-2\gamma}}{1-e^{-\gamma}}$.
Additionally, we have 
\begin{align*}
    \underset{x \rightarrow 0^+}{\lim} \tilde{g}^\prime(x) = \infty, \quad \underset{x \rightarrow \infty}{\lim} \tilde{g}^\prime(x) = 0.
\end{align*}
Summarizing the information about $\tilde{g}^\prime(x)$, we can infer that $\tilde{g}^\prime(x)$ has exactly one zero in the interval $x_0 \in \left(0,\frac{e^{-2\gamma}}{1-e^{-\gamma}}\right)$.
For $x<x_0$, $\tilde{g}^\prime(x)$ is positive, whereas for $x>x_0$, $\tilde{g}^\prime(x)$ is negative.
Because $g^\prime(x) = e^{-x} \tilde{g}^\prime(x)$,  ${g}^\prime(x)$ also has exactly one zero in the interval $x_0 \in \left(0,\frac{e^{-2\gamma}}{1-e^{-\gamma}}\right)$, is positive for $x<x_0$, and negative for $x>x_0$.
Hence, the difference $g(x)$ has exactly one maximum at $x_0 \in \left(0,\frac{e^{-2\gamma}}{1-e^{-\gamma}}\right)$, is strictly increasing for $x<x_0$, and strictly decreasing for $x>x_0$.
It follows that we only have to show that $\underset{x \rightarrow 0^+}{\lim} g(x) \ge 0$ and $\underset{x \rightarrow \infty}{\lim} g(x) \ge 0$.
For $x\rightarrow \infty$, one can directly see from \eqref{eq:DiffEq} that
\begin{align*}
    \underset{x \rightarrow \infty}{\lim} g(x) = 0.
\end{align*}
For $x\rightarrow 0^+$ the situation is less clear. We use the series representation of $E_1(x)$ from \cite[p.229, eq. (5.1.11)]{Abramo} and write $g(x)$ as 
\begin{align*}
    g(x) = -\gamma - \ln(x) - \summe{k=1}{\infty} \frac{(-x)^k}{k k!} - e^{-x} \ln\left(1 + \frac{e^{-\gamma}}{x}\right).
\end{align*}
Because
\begin{equation}
    \underset{x\rightarrow 0^+}{\lim}\left[ - \ln(x) - e^{-x} \ln\left(1 + \frac{e^{-\gamma}}{x}\right)\right] =  \gamma
\end{equation}
and furthermore, with the geometric series, we have
\begin{align}
   \nonumber\underset{x\rightarrow 0^+}{\lim} \summe{k=1}{\infty} \frac{(-x)^k}{k k!} &\le \underset{x\rightarrow 0^+}{\lim} \left[-1+ \summe{k=0}{\infty} x^k\right]\\
   & = \underset{x\rightarrow 0^+}{\lim}\left[ \frac{1}{1-x} - 1\right] = 0,
\end{align}
as well as 
\begin{align}
   \nonumber \underset{x\rightarrow 0^+}{\lim} \summe{k=1}{\infty} \frac{(-x)^k}{k k!} &\ge \underset{x\rightarrow 0^+}{\lim} \left[1- \summe{k=0}{\infty} x^k\right]\\
    & = \underset{x\rightarrow 0^+}{\lim}\left[1- \frac{1}{1-x} \right] = 0.
 \end{align}
It follows that 
\begin{align*}
    \underset{x \rightarrow 0^+}{\lim} g(x) = 0.
\end{align*}
Hence, we have
\begin{equation}
    g(x)> 0, \quad \forall x > 0
\end{equation} 
and the upper bound in \eqref{eq:ExpIntGoal} holds.

\subsection{Ergodic Upper Bound for Linear Precoding}\label{app:LinearUpperBoundErgodicRate}
In this section, we derive an upper bound for $\mathbb{E}[{\overline{\text{SE}}_{\text{Lin},\tre}}]$ defined in \eqref{eq:HSNRZF}.
This upper bound holds for random phase shifts, statistical phase shifts as well as for phase shifts which are only based on optimizing the numerator $|\bh_{\tc,K+1}^\Her \bthet|^2$.
All these choices have in common that they are independent of the channels of user $k=1,\dots,K$.
Firstly, we derive a lower bound for $\bbthet^\Her\bD_\ts^\Her \bC_\ts^{\inv} \bD_\ts\bbthet$ with the matrix inversion lemma by
        \begin{align}\label{eq:DenomLowerInit}
        \nonumber&\bbthet^\Her\bD_\ts^\Her \bC_\ts^{\inv} \bD_\ts\bbthet =\\
        \nonumber &=\bbthet^\Her\bD_\ts^\Her {\left((\bH_\td^\ts\bH_\td^{\ts,\Her}) - \bH_\td^{\ts} \bm{b}\bm{b}^\Her \bH_\td^{\ts,\Her}\right)}^{\inv} \bD_\ts\bbthet\\
        \nonumber&=  \bbthet^\Her\bD_\ts^\Her{(\bH_\td^\ts\bH_\td^{\ts,\Her})}^{\inv}\bD_\ts\bbthet + \frac{\abs{\bm{b}^\Her \bH_\td^{\ts,\Her} {(\bH_\td^\ts\bH_\td^{\ts,\Her})}^{\inv} \bD_\ts\bbthet}^2}{1-\bm{b}^\Her \bH_\td^{\ts,\Her}{(\bH_\td^\ts\bH_\td^{\ts,\Her})}^{\inv}\bH_\td^\ts \bm{b} }\\
        &=  \bbthet^\Her\bD_\ts^\Her{(\bH_\td^\ts\bH_\td^{\ts,\Her})}^{\inv}\bD_\ts\bbthet + \frac{\abs{\bm{b}^\Her \bH_\td^{\ts,+} \bD_\ts \bbthet}^2}{1-\bm{b}^\Her \bm{P}_{\bH_\td^{\ts,\Her}}\bm{b} }
\end{align}
where we used the definition of the matrix $\bm{C}_\ts  = \bm{H}_{\text{d}}^\ts\bm{P}_{\bm{b}}^{\perp} \bm{H}_{\text{d}}^{\ts,\Her}$, the pseudoinverse \mbox{$\bH_\td^{\ts,+}=\bH_\td^{\ts,\Her}{(\bH_\td^\ts\bH_\td^{\ts,\Her})}^{\inv}$}, and the projector \mbox{$\bm{P}_{\bH_\td^{\ts,\Her}} = \bH_\td^{\ts,\Her}{(\bH_\td^\ts\bH_\td^{\ts,\Her})}^{\inv}\bH_\td^\ts=\bH_\td^{\ts,+}\bH_\td^\ts$}.
With the definition of the matrix $\bD_\ts =[\bm{H}^\ts_{\text{c}},\;\bm{H}^\ts_{\text{d}}\bm{b}]$ we can express the first term of \eqref{eq:DenomLowerInit} as 
        \begin{align}\label{eq:DenomLowerFirst}
        \nonumber&  \bbthet^\Her\bD_\ts^\Her{(\bH_\td^\ts\bH_\td^{\ts,\Her})}^{\inv}\bD_\ts\bbthet
        \nonumber=  \bthet^\Her\bH_\tc^{\ts,\Her}{(\bH_\td^\ts\bH_\td^{\ts,\Her})}^{\inv}\bH_\tc^\ts\bthet\\
        & + 2\mathrm{Re} ( \bm{b}^\Her \bH_\td^{\ts,+} \bH_\tc^\ts \bthet) 
          + \bm{b}^\Her \bm{P}_{\bH_\td^{\ts,\Her}}\bm{b}
\end{align}
and the second term of \eqref{eq:DenomLowerInit} as
        \begin{align}\label{eq:DenomLowerSecond}
            \nonumber&\frac{\abs{\bm{b}^\Her \bH_\td^{\ts,+} \bD_\ts \bbthet}^2}{1-\bm{b}^\Her \bm{P}_{\bH_\td^{\ts,\Her}}\bm{b} }=
 \frac{\abs{\bm{b}^\Her \bH_\td^{\ts,+} \bH_\tc^\ts \bthet}^2}{1-\bm{b}^\Her \bm{P}_{\bH_\td^{\ts,\Her}}\bm{b}} \\
        &+ \frac{ 2\mathrm{Re} (\bm{b}^\Her \bH_\td^{\ts,+} \bH_\tc^\ts \bthet)\bm{b}^\Her \bm{P}_{\bH_\td^{\ts,\Her}}\bm{b}  }{1-\bm{b}^\Her \bm{P}_{\bH_\td^{\ts,\Her}}\bm{b} } + \frac{(\bm{b}^\Her \bm{P}_{\bH_\td^{\ts,\Her}}\bm{b})^2 }{1-\bm{b}^\Her \bm{P}_{\bH_\td^{\ts,\Her}}\bm{b} }.
\end{align}
Finding a common denominator for the terms \eqref{eq:DenomLowerFirst} and \eqref{eq:DenomLowerSecond} and plugging them into \eqref{eq:DenomLowerInit}
results in 
        \begin{align}\label{eq:DenomLower}
            \nonumber&\bbthet^\Her\bD_\ts^\Her \bC_\ts^{\inv} \bD_\ts\bbthet = \\
              \nonumber&=  \bthet^\Her\bH_\tc^{\ts,\Her}{(\bH_\td^\ts\bH_\td^{\ts,\Her})}^{\inv}\bH_\tc^\ts\bthet  \\
        \nonumber& +\hspace*{-2pt}\frac{2\mathrm{Re} ( \bm{b}^\Her \bH_\td^{\ts,+} \bH_\tc^\ts \bthet)\hspace*{-1pt}+\hspace*{-1pt} \bm{b}^\Her \bm{P}_{\bH_\td^{\ts,\Her}}\bm{b} \hspace*{-1pt}+\hspace*{-1pt} \abs{\bm{b}^\Her \bH_\td^{\ts,+} \bH_\tc^\ts \bthet}^2 }{{1-\bm{b}^\Her \bm{P}_{\bH_\td^{\ts,\Her}}\bm{b} }}\\
        \nonumber &=  \bthet^\Her\bH_\tc^{\ts,\Her}{(\bH_\td^\ts\bH_\td^{\ts,\Her})}^{\inv}\bH_\tc^\ts\bthet  \\
        \nonumber& +\frac{\abs{\bm{b}^\Her \bH_\td^{\ts,+} \bH_\tc^\ts \bthet+1}^2 -1 +\bm{b}^\Her \bm{P}_{\bH_\td^{\ts,\Her}}\bm{b} }{{1-\bm{b}^\Her \bm{P}_{\bH_\td^{\ts,\Her}}\bm{b} }}\\
         & \ge \bthet^\Her\bH_\tc^{\ts,\Her}{(\bH_\td^\ts\bH_\td^{\ts,\Her})}^{\inv}\bH_\tc^\ts\bthet -1
\end{align}
where we used that $1-\bm{b}^\Her \bm{P}_{\bH_\td^{\ts,\Her}}\bm{b} >0$.
With \eqref{eq:DenomLower}, we can upper bound the ergodic rate as
\begin{align}
   \nonumber &\expct{\overline{\ac{SE}}_{\text{Lin},\tre}}=  \expct{\log_2\left(\frac{\abs{\bh_{\tc,K+1}^\Her \bthet}^2\bar{p}}{1+  \bbthet^\Her\bD_\ts^\Her \bC_\ts^{\inv} \bD_\ts\bbthet }\right)}\\
   \nonumber &\le  \expct{\log_2\left(\frac{\abs{\bh_{\tc,K+1}^\Her \bthet}^2\bar{p}}{ \bthet^{\Her}\bH_\tc^{\ts,\Her}   {(\bH_\td^\ts \bH_\td^{\ts,\Her})}^\inv \bH_\tc^\ts\bthet  }\right)}\\
    \nonumber&=\expct{\log_2\left(\abs{\bh_{\tc,K+1}^\Her \bthet}^2\bar{p} \right)- \log_2 \left(\bh^\Her   {(\bH_\td^\ts \bH_\td^{\ts,\Her})}^\inv \bh  \right)}\\
\end{align}
where we introduced the definition
\begin{equation}\label{eq:hDefinition}
    \bh = \bH_\tc^{\ts} \bthet.
\end{equation}
The mitiagtion term 
\begin{align}\label{eq:Interference}
    \expct{\text{Mit}} = \underset{\bH_\tre,\bthet}{\mathbb{E}} \left[ \underset{\bH_\td^\ts|\bH_\tre,\bthet}{\mathbb{E}}\left[\log_2 \left( \bh^\Her   {(\bH_\td^\ts \bH_\td^{\ts,\Her})}^\inv \bh  \right)\right]\right]
\end{align}
is analyzed in the following.
For this, we define the matrix
\begin{equation}
    \bm{H}_{\text{d}}^{\ts,\Her} =[\bh_{\td,1},\dots,\bh_{\td,K}] = [\bm{\tilde{h}}_{\td,1},\dots,\bm{\tilde{h}}_{\td,K}] \bm{\Sigma}_\td =  \bm{\tilde{H}}_{\text{d}}^{\ts,\Her} \bm{\Sigma}_\td
\end{equation}
where $\bm{\Sigma}_\td$ is a diagonal matrix with $[\bm{\Sigma}_{\td}]_{k,k} = \sqrt{\tr(\bm{R}_{\text{\td,k}})}$ and $ \bm{\tilde{H}}_{\text{d}}^{\ts,\Her}= [\bm{\tilde{h}}_{\td,1},\dots,\bm{\tilde{h}}_{\td,K}]$ are the normalized direct channels with unit-variance.
Further defining $\bm{\tilde{h}} = \bm{\Sigma}_\td^{\inv} \bh$, we can express the quadratic form within the logarithm in \eqref{eq:Interference} as 
\begin{equation}
    \bh^\Her   {(\bH_\td^\ts \bH_\td^{\ts,\Her})}^\inv \bh  = \bm{\tilde{h}}^\Her   {( \bm{\tilde{H}}_{\text{d}}^{\ts}  \bm{\tilde{H}}_{\text{d}}^{\ts,\Her})}^\inv \bm{\tilde{h}}.
\end{equation}
Introducing the \ac{EVD} of $ {( \bm{\tilde{H}}_{\text{d}}^{\ts}  \bm{\tilde{H}}_{\text{d}}^{\ts,\Her})}^\inv  = \summe{k=1}{K} \bu_k \bu_k^\Her \frac{1}{\lambda_k}$ and applying the weighted harmonic-arithmetic mean inequality, we obtain
\begin{equation}
    \begin{aligned}
    \bm{\tilde{h}}^\Her   {( \bm{\tilde{H}}_{\text{d}}^{\ts}  \bm{\tilde{H}}_{\text{d}}^{\ts,\Her})}^\inv \bm{\tilde{h}} &= \summe{k=1}{K} \abs{\bm{\tilde{h}}^\Her   \bu_k}^2 \frac{1 }{\lambda_k} \ge  \frac{\left(\summe{k=1}{K} \abs{\bm{\tilde{h}}^\Her   \bu_k}^2 \right)^2}{\summe{k=1}{K} \abs{\bm{\tilde{h}}^\Her   \bu_k}^2 {\lambda_k}}\\
    &= \frac{\norm{\bm{\tilde{h}}}^4}{\bm{\tilde{h}}^\Her   {( \bm{\tilde{H}}_{\text{d}}^{\ts}  \bm{\tilde{H}}_{\text{d}}^{\ts,\Her})} \bm{\tilde{h}}}.
    \end{aligned}
\end{equation}
Using this result and recognizing that $\log_2(\frac{a}{x})$ is convex w.r.t. $x>0$ for any positive $a$, we can bound the inner expectation in \eqref{eq:Interference} by applying Jensen's inequality as
\begin{align}\label{eq:Upperboundwithh}
       \nonumber &\underset{\bH_\td^\ts|\bH_\tre,\bthet}{\mathbb{E}}\left[\log_2 \left( \bh^\Her   {(\bH_\td^\ts \bH_\td^{\ts,\Her})}^\inv \bh  \right)\right]\\
       \nonumber  &\ge\underset{\bH_\td^\ts|\bH_\tre,\bthet}{\mathbb{E}}\left[\log_2 \left( \frac{\norm{\bm{\tilde{h}}}^4}{\bm{\tilde{h}}^\Her   {( \bm{\tilde{H}}_{\text{d}}^{\ts}  \bm{\tilde{H}}_{\text{d}}^{\ts,\Her})} \bm{\tilde{h}}}  \right)\right]\\
       \nonumber  &\ge\log_2 \left( \frac{\norm{\bm{\tilde{h}}}^4}{{\bm{\tilde{h}}}^\Her  \underset{\bH_\td^\ts|\bH_\tre,\bthet}{\mathbb{E}}\left[ {( \bm{\tilde{H}}_{\text{d}}^{\ts}  \bm{\tilde{H}}_{\text{d}}^{\ts,\Her})} \right]\bm{\tilde{h}}}  \right)\\
       \nonumber  &=\log_2 \left( \frac{\norm{\bm{\tilde{h}}}^4}{{\bm{\tilde{h}}}^\Her  \,\eye\, \bm{\tilde{h}}}  \right)\\
       \nonumber  &=\log_2 \left(\norm{\bm{\tilde{h}}}^2 \right)\\
         &=\log_2 \left( \summe{k=1}{K}\frac{\abs{h_k}^2 }{\tr(\bm{R}_{\td,k})} \right).
\end{align}
As $\bH_\tre^\ts$ is assumed to be independent of $\bH_\td^\ts$ and it is additionally assumed that the phases are chosen independent of $\bH_\tre^\ts$ (either random, based on the statistics or only dependent of $\bh_{\tre,K+1}$),
we have $\mathbb{E}[\bH_\td^\ts \bH_\td^{\ts,\Her}|\bH_\tre,\bthet] = \mathbb{E}[\bH_\td^\ts \bH_\td^{\ts,\Her}]$.
As the users are pairwise independent and the variances are normalized to unit-norm, $\expct{\bH_\td^\ts \bH_\td^{\ts,\Her}} = \eye $ holds.
Analyzing \eqref{eq:Upperboundwithh}, we observe with \eqref{eq:hDefinition} that given $\bthet$, the variable $h_k = \bh^\Her_{\tc,k}\bthet $ is a sum of Gaussian random variables.
This holds because we have $\bh_{\trek} \sim \gaussdist{\bm{0}}{\bm{R}_{\trek}}$ and from $\bh_{\tck}^\Her = \sqrt{\NB L_{\text{G}}} \bh_{\trek}^\Her \diag(\bm{a})$, it directly follows that also $\bh_{\tck} \sim \gaussdist{\bm{0}}{\bm{R}_{\tck}}$ with 
\begin{equation}
    \bm{R}_{\tc,k} = \expct{\bh_{\tck}\bh_{\tck}^{\Her}} = \diag(\bm{a^*})\bm{R}_{\tre,k}\diag(\bm{a})  L_{\text{G}} \NB.
\end{equation}
Note that we assume $\bthet$ to be independent of $\bh_{\tre,k}$ for $k \le K$ and, hence, $h_k$ is again Gaussian distributed with mean and variance
\begin{equation}
    \expct{h_k} = 0, \quad   \text{var}[{h_k}] = \expct{\abs{\bh_{\tc,k}^\Her\bthet}^2} = \bthet^\Her \bm{R}_{\tc,k} \bthet.
\end{equation}
Therefore, defining
\begin{align}
    \xi_k = \frac{2}{ \bthet^\Her \bm{R}_{\tc,k} \bthet}   \abs{h_k}^2 \sim \chi^2(2),
\end{align}
 $\xi_k$ is chi-squared distributed with two degrees of freedom.
Combining these observations with the lower bound in \eqref{eq:Upperboundwithh}, we can lower bound the expression in \eqref{eq:Interference} by 
\begin{align}\label{eq:LowerBoundXi}
    \expct{\text{Mit}} \ge \underset{\bthet}{\mathbb{E}} \left[ \underset{\xi_{1:K}}{\mathbb{E}}\left[\log_2 \left( \summe{k=1}{K}\alpha_k \xi_k   \right)\right]\right]
\end{align}
where \mbox{$\alpha_k = \bthet^\Her \bm{R}_{\tc,k} \bthet/\big(2\tr(\bm{R}_{\td,k})\big)$}, \mbox{$\xi_{1:K} = \xi_1,\xi_2,\dots,\xi_{K}$}.
The lower bound in \eqref{eq:LowerBoundXi} will be further bounded by successively evaluating the expression based on the chain rule of probabilities where we first assume $K\ge2$.
Starting with $\xi_1$, the inner expectation in \eqref{eq:LowerBoundXi} can be written as
\begin{align}\label{eq:LawOfTotalFirststep}
    \nonumber &\underset{\xi_{1:K}}{\mathbb{E}}\left[\log_2\left( \summe{k=1}{K}\alpha_k \xi_k  \right) \right] \\
    &=\underset{\xi_{2:K}}{\mathbb{E}}\left[\underset{\xi_1}{\mathbb{E}}\left[\log_2\left( b_1 + \alpha_1 \xi_1  \right)\Big|\xi_{2:K}\right]\right].
\end{align}
where we additionally introduced the notation 
\begin{equation}
    b_1 =  \summe{k=2}{K}\alpha_k \xi_k.
\end{equation}
Using the definition of the pdf of a chi-squared random variable with two degrees of freedom $f_{\xi_1}(\xi_1) = \frac{1}{2}e^{-\frac{\xi_1}{2}}$, the inner expectation of \eqref{eq:LawOfTotalFirststep} can be expressed as 
\begin{align}
    \nonumber &\underset{\xi_1}{\mathbb{E}}\left[\log_2\left(b_1 + \alpha_1 \xi_1  \right)\Big|\xi_{2:K}\right]=\\
    \nonumber &=\int_{-\infty}^{\infty} \log_2(b_1 + \alpha_1 \xi_1) f_{\xi_1}(\xi_1) \mathrm{d}\xi_1 \\
    \nonumber  &= \frac{1}{2}\frac{1}{\ln(2)}\int_{0}^{\infty}  \ln(b_1 + \alpha_1 \xi_1) e^{-\frac{\xi_1}{2}} \mathrm{d}\xi_1\\
    \nonumber  &\overset{(a)}{=}\frac{1}{2}\frac{1}{\ln(2)}\Bigg(\left[ -2 e^{-\frac{\xi_1}{2}} \ln(b_1 + \alpha_1 \xi_1)  \right]_0^{\infty}\\
  \nonumber  &-\int_{0}^{\infty}  \frac{-2 \alpha_1e^{-\frac{\xi_1}{2}}}{b_1+\alpha_1\xi_1} \mathrm{d}\xi_1 \Bigg)\\
    \nonumber  &=\frac{1}{2}\frac{1}{\ln(2)}\left( 2 \ln(b_1)  + \int_{0}^{\infty}  \frac{e^{-\frac{\xi_1}{2}}}{\frac{b_1}{2 \alpha_1}+\frac{\xi_1}{2}} \mathrm{d}\xi_1 \right)\\
    \nonumber &\overset{(b)}{=}\frac{1}{2}\frac{1}{\ln(2)}\left( 2 \ln(b_1)  + \int_{\frac{b_1}{2\alpha_1}}^{\infty}  \frac{e^{-(u-\frac{b_1}{2\alpha_1}) }}{u} 2\mathrm{d}u \right)\\
 \nonumber   &=\frac{1}{2}\frac{1}{\ln(2)}\left( 2 \ln(b_1)  + 2e^{\frac{b_1}{2\alpha_1} }\int_{\frac{b_1}{2\alpha_1}}^{\infty}  \frac{e^{-u }}{u} \mathrm{d}u \right)\\
    &= \frac{1}{\ln(2)} \left( \ln(b_1) + e^{\frac{b_1}{2\alpha_1}}E_1\left( \frac{b_1}{2 \alpha_1}\right)\right)
\end{align}
where we used integration by parts in $(a)$ and integration by substitution in $(b)$.
In the last line, $E_1(x)$ is defined as the exponential integral.
The bounds for $E_1(x)$ in \cite[p. 229, eq. (5.1.20)]{Abramo} are not suited for our purposes and, hence, we derived a new lower bound for the exponential integral given in Lemma \ref{lemma:ExpIntegral}.
Using this lower bound, we obtain 
\begin{equation}\label{eq:ExpIntBoundApply}
    \begin{aligned}
    &\underset{\xi_1}{\mathbb{E}}\left[\log_2\left(b_1 + \alpha_1 \xi_1  \right)\Big|\xi_{2:K}\right]\\
    & > \frac{1}{\ln(2)} \left(\ln(b_1 ) + \ln\left(1 + e^{-\gamma} \frac{2 \alpha_1}{b_1}\right) \right)\\
    &=   \log_2(b_1 + 2  e^{-\gamma} \alpha_1)\\
\end{aligned} 
\end{equation}
where $\gamma$ is the Euler-Mascheroni constant.
By defining 
\begin{equation}
    b_k = 2  e^{-\gamma} \summe{j=1}{k-1}\alpha_j + \summe{j=k+1}{K}\alpha_j \xi_j
\end{equation}
and by iteratively proceeding with the same method for $k=2,\dots,K$ as we did above for $k=1$, we have in the $k$-th step
\begin{align}
       \nonumber & \underset{\xi_{k:K}}{\mathbb{E}}\left[\log_2(b_{k-1} + 2  e^{-\gamma}\alpha_{k-1})\right]\\
    \nonumber&= \underset{\xi_{k:K}}{\mathbb{E}}\left[\log_2(b_{k} + \xi_k \alpha_k)\right]\\
    \nonumber&=\underset{\xi_{k+1:K}}{\mathbb{E}}\left[\underset{\xi_k}{\mathbb{E}}\left[\log_2\left(b_{k} + \alpha_k \xi_k  \right)\Big|\xi_{k+1:K}\right]\right]\\
    &> \underset{\xi_{k+1:K}}{\mathbb{E}}\left[\log_2(b_{k} + 2  e^{-\gamma}\alpha_k)\right].
\end{align}
Hence, iteratively applying the bound in \eqref{eq:ExpIntBoundApply} for all $k=1,\dots,K$, we obtain in the final $K$th step the value 
$b_{K} =  2 e^{-\gamma}\summe{k=1}{K-1}\alpha_k$ and, therefore, we can bound the expression in \eqref{eq:LawOfTotalFirststep} as 
\begin{align}\label{eq:MultiUserBound}
    \underset{\xi_{1:K}}{\mathbb{E}}\left[\log_2\left( \summe{k=1}{K}\alpha_k \xi_k  \right) \right] 
    >\log_2\left( 2   e^{-\gamma}\summe{k=1}{K}\alpha_k\right).
\end{align}
This bound is valid for $K \ge 2$. When $K=1$ the situation is different and we can directly obtain an expression for the expression in \eqref{eq:LowerBoundXi}, given by 
\begin{equation}
    \expct{\text{Mit}} \ge \underset{\bthet}{\mathbb{E}} \left[ \underset{\xi_1}{\mathbb{E}}\left[\log_2 \left(\alpha_1 \xi_1   \right)\right]\right]=\underset{\bthet}{\mathbb{E}} \left[ \log_2 ( \alpha_1) +  \underset{\xi_1}{\mathbb{E}}\left[\log_2 \left( \xi_1   \right)\right]\right].
\end{equation}
From \cite[p. 943, eq. (26.4.36)]{Abramo}, we know that 
\begin{equation}
    \underset{\xi_1}{\mathbb{E}}\left[\log_2 \left( \xi_1   \right)\right] = \log_2(2) + \underset{\xi_1}{\mathbb{E}}\left[\log_2 \left( \frac{\xi_1}{2}   \right)\right]= \log_2(2)+ \psi(1)
\end{equation}
where $\psi(\bullet)$ is the digamma function. Using \cite[p.258, eq. (6.3.2)]{Abramo}, we have $\psi(1) = -\gamma$ and, hence, we arrive at
\begin{equation}\label{eq:SingleUserBound}
    \underset{\xi_1}{\mathbb{E}}\left[\log_2 \left(\alpha_1 \xi_1   \right)\right]= \log_2(2 e^{-\gamma}\alpha_1).
\end{equation}
Combining the bound from \eqref{eq:MultiUserBound} for $K\ge 2$ and the expression for $K=1$ in \eqref{eq:SingleUserBound}, we obtain the bound for equation \eqref{eq:Upperboundwithh} with the help of \eqref{eq:LowerBoundXi} as
\begin{align}
    \nonumber \expct{\text{Mit}}&\ge \underset{\bh_{\tre,K+1}, \bthet}{\mathbb{E}}\left[\underset{\bH_\tre^\ts|\bh_{\tre,K+1}, \bthet}{\mathbb{E}}\left[ \log_2\left( \summe{k=1}{K} \frac{ \abs{h_k}^2 }{\tr(\bm{R}_{\td,k})}  \right) \right]\right]\\
    \nonumber&= \underset{\bh_{\tre,K+1}, \bthet}{\mathbb{E}}\left[\underset{\bH_\tre^\ts|\bh_{\tre,K+1}, \bthet}{\mathbb{E}}\left[ \log_2\left( \summe{k=1}{K} \alpha_k \xi_k \right)\right] \right]\\
    \nonumber &\ge \underset{\bh_{\tre,K+1}, \bthet}{\mathbb{E}}\left[\log_2\left(2  e^{-\gamma} \summe{k=1}{K}\alpha_k\right)\right]\\
    &= \underset{\bh_{\tre,K+1}, \bthet}{\mathbb{E}}\left[\log_2\left(  e^{-\gamma} \summe{k=1}{K}\frac{\bthet^\Her \bm{R}_{\tc,k} \bthet}{\tr(\bm{R}_{\td,k})}  \right)\right]
\end{align}
and the final upper bound of the ergodic rate is given by 
\begin{equation}\label{eq:ErgLinearFinalRateBound}
    \expct{\overline{\ac{SE}}_{\text{Lin},\tre}} \le  \expct{\log_2\left(\frac{\abs{\bh_{\tc,K+1}^\Her \bthet}^2\bar{p}}{   e^{-\gamma}\summe{k=1}{K}\frac{\bthet^\Her \bm{R}_{\tc,k} \bthet}{\tr(\bm{R}_{\td,k})}  }\right)}.\\
\end{equation}

\subsection{SE Expressions for Random/Statistical Phase Shifts under i.i.d. Rayleigh Fading}\label{app:LinearDPCUpperBoundErgodicRateiidRayleighRandom}
Under Rayleigh fading, it is possible to derive an exact expression for
\begin{equation}
     \expct{\log_2\left( \abs{\bh_{\tc,K+1}^\Her \bthet}^2 \bar{p} \right)}.
\end{equation}
We start similar to Appendix \ref{app:LinearUpperBoundErgodicRate} by defining
\begin{equation}
    \xi = \frac{2}{\text{var}[{\bh_{\tc,K+1}^\Her\bthet}]} \abs{\bh_{\tc,K+1}^\Her \bthet}^2 \; \sim \chi^2(2)
\end{equation}
which is chi-squared distributed with two degrees of freedom and the variance $\text{var}[{\bh_{\tc,K+1}^\Her\bthet}]$ is given by
\begin{equation}
    \text{var}[{\bh_{\tc,K+1}^\Her\bthet}] = \expct{\abs{\bh_{\tc,K+1}^\Her\bthet}^2} = \bthet^\Her \bm{R}_{\tc,K+1} \bthet.
\end{equation}
Using \eqref{eq:SingleUserBound} according to \cite[p. 258, eq. (6.3.2)]{Abramo}, we have 
\begin{align}\label{eq:StatPhaseNumeratorLinearRayleigh}
      \nonumber   \expct{\log_2\left( \abs{\bh_{\tc,K+1}^\Her \bthet}^2 \bar{p} \right)}  &=  \expct{\log_2\left( \frac{1}{2} \bthet^\Her \bm{R}_{\tc,K+1} \bthet \xi \bar{p} \right)}\\
        &=  {\log_2\left(  e^{-\gamma} \bthet^\Her \bm{R}_{\tc,K+1} \bthet  \bar{p} \right)}
\end{align}
for statistical/random phase shifts.
Under i.i.d. Rayleigh fading, this reduces to 
\begin{equation}
    \expct{\log_2\left( \abs{\bh_{\tc,K+1}^\Her \bthet}^2 \bar{p} \right)} = \log_2(e^{-\gamma}L_{\tre,K+1} L_{\text{G}} \NB \NRe \bar{p})
\end{equation}
and we arrive at the \ac{SE} for \ac{DPC}
\begin{align}\label{eq:ErgodicRate}
    \nonumber  \expct{\overline{\text{SE}}_{\text{DPC},\tre}} = \log_2(e^{-\gamma}L_{\tre,K+1} L_{\text{G}} \NB \NRe \bar{p}).
  \end{align}
  as well as the upper bound for linear precoding
    \begin{align}
     \nonumber   \expct{\overline{\ac{SE}}_{\text{Lin},\tre}} &\le  \log_2\left( \frac{ e^{-\gamma}L_{\tre,K+1}  L_{\text{G}} \NB\NRe \bar{p} }{   e^{-\gamma}\frac{\NRe L_{\text{G}} \NB}{\NB} \summe{k=1}{K} \frac{L_{\tre,k}}{L_{\td,k}} }\right)\\
    &= \log_2\left( \NB \frac{L_{\tre,K+1}   \bar{p} }{  \summe{k=1}{K} \frac{L_{\tre,k}}{L_{\td,k}} }\right).
\end{align}

\subsection{Linear Precoding Upper Bound for Weak User Maximization under i.i.d. Rayleigh Fading}\label{app:LinearUpperBoundErgodicRateiidRayleighInst}
When considering the phases based on optimizing the weak user, i.e., 
\begin{equation}
    \bthet = \exp(\mathrm{j}\arg(\bh_{\tc,K+1}))
\end{equation}
we arrive at 
\begin{align}
     \nonumber &\expct{\log_2\left( \abs{\bh_{\tc,K+1}^\Her \bthet}^2 \bar{p} \right)} 
     \le \log_2\left( \expct{\abs{\bh_{\tc,K+1}^\Her \bthet}^2} \bar{p} \right)\\
     &=\log_2\left( \frac{\pi}{4} \NB L_{\text{G}}L_{\tre,K+1} \NRe^2\bar{p} \right)
 \end{align}
according to \cite{SNRRayleigh}.
Hence, the upper bound in case of optimizing the weak user results in
\begin{equation}
    \begin{aligned}
        \expct{\overline{\ac{SE}}_{\text{Lin},\tre}} &\le  \log_2\left( \frac{\pi e^{\gamma}}{4}  \NRe \NB\frac{L_{\tre,K+1}   \bar{p} }{  \summe{k=1}{K} \frac{L_{\tre,k}}{L_{\td,k}} }\right).
\end{aligned}
\end{equation}

\subsection{DPC Lower Bound for Weak User Maximization under i.i.d. Rayleigh Fading}\label{app:DPCUpperBoundsInst}
We are considering i.i.d. Rayleigh Fading, i.e. \mbox{$\bh_{\tre,K+1} \sim \gaussdist{\bm{0}}{\eye\; L_{\tre,K+1}}$}
as well as the optimal phase shifts based on instantaneous \ac{CSI} given by alignment as
\begin{equation}
    \bthet = \exp(\mathrm{j}\arg(\bh_{\tc,K+1})).
\end{equation}
In this case, we have
    \begin{align}\label{eq:DPCRateOptimalPhases}
      \nonumber  \expct{\overline{\text{SE}}_{\text{DPC},\tre}} &= \expct{\log_2\left(\abs{\bh_{\tc,K+1}^\Her \bthet}^2 \bar{p} \right)}\\
        &= \log_2\left( \bar{p} \right) + 2\expct{ \log_2\left(\summe{n=1}{\NRe}\abs{h_{\tc,K+1,n}}\right) }.
    \end{align}
Defining 
\begin{equation}
    \zeta_n = \sqrt{\frac{2}{\NB L_{\tre,K+1} L_{\text{G}}}} \abs{h_{\tre,K+1,n}},
\end{equation}
we obtain with Jensen's inequality
\begin{equation}
    \begin{aligned}
        &\expct{\overline{\text{SE}}_{\text{DPC},\tre}} =\log_2\left(  \bar{p} \right) + 2\expct{ \log_2\left(\sqrt{\frac{\NB L_{\tre,K+1} L_{\text{G}}}{2}}\summe{n=1}{\NRe}\zeta_n\right) }\\
        &= \log_2\left( \frac{1}{2}\NB L_{\tre,K+1} L_{\text{G}} \bar{p} \NRe^2 \right)+ 2\expct{ \log_2\left(\frac{1}{\NRe}\summe{n=1}{\NRe}\zeta_n\right) }\\
        &\ge \log_2\left(  \frac{1}{2}L^\prime_G \bar{p}\NB \NRe^2 L_{\tre,K+1}\right) + \frac{2}{\NRe}\expct{ \summe{n=1}{\NRe}\log_2\left( \zeta_n\right)  }.\\
    \end{aligned}
\end{equation}
By recognizing that $\zeta_n^2$ is again chi-squared distributed with two degrees of freedom, we can use the same argumentation as in Eqn. \eqref{eq:SingleUserBound} based on \cite[p. 258, eq. (6.3.2)]{Abramo} , after which we arrive at
\begin{equation}
    \expct{\log_2( \zeta_n) }= \frac{1}{2}\expct{\log_2(\zeta_n^2)  } =\frac{1}{2} \log_2(2e^{-\gamma})
\end{equation}
and obtain the lower bound
\begin{equation}
    \expct{\overline{\text{SE}}_{\text{DPC},\tre}} \ge \log_2\left( e^{-\gamma} L^\prime_G L_{\tre,K+1}\bar{p} \NB \NRe^2 \right)
\end{equation}
for the optimal phase shifts with instantaneous \ac{CSI} under i.i.d. Rayleigh fading.

\bibliographystyle{IEEEtran}
\bibliography{refs}

\end{document}